\newcommand{\eps}{\varepsilon}
\def\Comment#1{\textsf{\textsl{$\langle\!\langle$#1\/$\rangle\!\rangle$}}}
\def\set#1{\{ #1 \}}
\def\polylog{\operatorname{polylog}}
\def\mathsc#1{\text{\textsc{#1}}}
\newcommand{\churnlimit}{n/\polylog{n}}
\let\oldmarginpar\marginpar
\renewcommand\marginpar[1]{\-\oldmarginpar[\raggedleft\footnotesize #1]%
{\raggedright\footnotesize #1}}
\newcommand{\I}{\mathcal{I}}
\newcommand{\G}{\mathcal{G}}
\newcommand{\bG}{\bar{\mathcal{G}}}
\newcommand{\bg}{\bar{G}}
\newcommand{\bv}{\bar{V}}
\newcommand{\be}{\bar{E}}
\newcommand{\m}{\mathsc{m}}
\renewcommand{\d}{\mathsc{d}}
\renewcommand{\t}{\tau}
\newcommand{\Comm}{\textsc{Com}}
\newcommand{\Com}{\textsc{Com}}
\newcommand{\Exp}[1]{\text{E}\left[#1\right]}
\newcommand{\Prob}[1]{\text{Pr}\left[#1\right]}
\newcommand{\core}{\mathsc{{Core}}}
\newcommand{\Core}{\core}
\renewcommand{\ge}{\geqslant}
\renewcommand{\le}{\leqslant}
\renewcommand{\geq}{\geqslant}
\newtheorem{definition}{Definition}
\newtheorem{theorem}{Theorem}
\newtheorem{lemma}{Lemma}
\newtheorem{corollary}{Corollary}
\newcommand{\shortOnly}[1]{\ifthenelse{\boolean{short}}{#1}{}}
\newcommand{\onlyShort}[1]{\ifthenelse{\boolean{short}}{#1}{}}
\newcommand{\longOnly}[1]{\ifthenelse{\boolean{short}}{}{#1}}
\newcommand{\onlyLong}[1]{\ifthenelse{\boolean{short}}{}{#1}}
\begin{document}
\onlyShort{%
\conferenceinfo{SPAA'13,} {June 23--25, 2013, Montr\'eal, Qu\'ebec, Canada.}
\CopyrightYear{2013}
\crdata{978-1-4503-1572-2/13/07}
\clubpenalty=10000
\widowpenalty = 10000
}

\onlyLong{
\begin{titlepage}
}
\title{Storage and Search in Dynamic Peer-to-Peer Networks}
\onlyShort{
\numberofauthors{6}
\author{
\alignauthor John Augustine\\
  \affaddr{Department of Computer Science and Engineering}\\
  \affaddr{Indian Institute of Technology Madras,}
  \affaddr{Chennai, India.}\\
  \email{augustine@cse.iitm.ac.in}
\alignauthor Anisur Rahaman Molla  \\
  \affaddr{Division of Mathematical Sciences}\\
  \affaddr{Nanyang Technological University}
  \affaddr{Singapore 637371}\\
  \email{anisurpm@gmail.com}
\alignauthor Ehab Morsy\\
  \affaddr{Department of Mathematics}\\
  \affaddr{Suez Canal University}\\
  \affaddr{Ismailia 22541, Egypt} \\
  \email{ehabmorsy@gmail.com}.
\and  %
\alignauthor Gopal Pandurangan\titlenote{Research supported in part by the following  grants: Nanyang Technological University grant M58110000, Singapore Ministry of Education (MOE) Academic Research Fund (AcRF) Tier 2 grant MOE2010-T2-2-082, and a grant from the US-Israel Binational Science Foundation (BSF).}\\
  \affaddr{Division of Mathematical Sciences}\\
  \affaddr{Nanyang Technological University}\\
  \affaddr{Singapore 637371}\\ 
  \email{gopalpandurangan@gmail.com}.  
\alignauthor Peter Robinson\titlenote{Research supported in part by the following  grants: Nanyang Technological University grant M58110000, Singapore Ministry of Education (MOE) Academic Research Fund (AcRF) Tier 2 grant MOE2010-T2-2-082}\\
  \affaddr{Division of Mathematical Sciences}\\
  \affaddr{Nanyang Technological University}
  \affaddr{Singapore 637371}\\ 
  \email{peter.robinson@ntu.edu.sg}.  
\alignauthor Eli Upfal\\
  \affaddr{Department of Computer Science}\\
  \affaddr{Brown University} \\ 
  \affaddr{Providence, RI 02912, USA}\\
  \email{eli@cs.brown.edu}
}
}
\onlyLong{
\author{John Augustine\thanks{Department of Computer Science and Engineering, Indian Institute of Technology Madras, Chennai, India.  
\hbox{E-mail}:~{\tt augustine@cse.iitm.ac.in}.}  
\and Anisur Rahaman Molla\thanks{Division of Mathematical
Sciences, Nanyang Technological University, Singapore 637371.
\hbox{E-mail}:~{\tt anisurpm@gmail.com, peter.robinson@ntu.edu.sg}} 
\and Ehab Morsy\thanks{\ Division of Mathematical
Sciences, Nanyang Technological University, Singapore 637371 and Department of Mathematics, Suez Canal University, Ismailia 22541, Egypt. 
\hbox{E-mail}:~{\tt ehabmorsy@gmail.com}.}  
\and Gopal Pandurangan\thanks{\ Division of Mathematical
Sciences, Nanyang Technological University, Singapore 637371 and Department of Computer Science, Brown University, Box 1910, Providence, RI 02912, USA.  \hbox{E-mail}:~{\tt gopalpandurangan@gmail.com}.  Research supported in part by the following  grants: Nanyang Technological University grant M58110000, Singapore Ministry of Education (MOE) Academic Research Fund (AcRF) Tier 2 grant MOE2010-T2-2-082,
and a grant from the US-Israel Binational Science
Foundation (BSF).} 
\and Peter Robinson$^{\dagger}$ 
\and Eli Upfal\thanks{Department of Computer Science, Brown University, Box 1910,
Providence, RI 02912, USA. \hbox{E-mail}:~{\tt eli@cs.brown.edu}}}
}

\date{}

\maketitle

\begin{abstract}
We study robust and efficient  distributed algorithms for searching, storing, 
and maintaining data  in dynamic Peer-to-Peer (P2P) networks. P2P
networks  are  highly dynamic networks  that experience heavy
node {\em  churn} (i.e., nodes  join and leave the network continuously over time). Our
goal  is to 
guarantee, despite high node churn rate,  that a large number of nodes in the network
can  store, retrieve, and maintain a large number of data items. 
Our main contributions are fast randomized  distributed algorithms  that
guarantee the above  with high probability even under
high {\em adversarial} churn.   In particular, we
present the following main results:
\begin{enumerate}
\item  A  randomized distributed
  search algorithm that with high probability guarantees that  searches from as many as $n - o(n)$ 
  nodes  ($n$ is the stable network size) succeed in ${O}(\log n )$-rounds despite ${O}(n/\log^{1+\delta} n)$ churn, 
  for any  small constant $\delta > 0$,  {\em per round}. 
  We assume that the churn is controlled by an
  oblivious adversary (that has  complete knowledge and control of what nodes join
  and leave and  at what time  and has unlimited computational power, but is
  oblivious to the random choices made by the algorithm).   
  \item  A storage and maintenance algorithm 
   that guarantees, with high probability,   data items  can be  
  efficiently stored (with only $\Theta(\log{n})$ copies of each data item) and
  maintained in a dynamic P2P network with churn rate up to ${O}(n/\log^{1+\delta} n)$ per round. 
  Our search algorithm together with  our storage and maintenance algorithm guarantees that
  as many as $n - o(n)$ nodes can efficiently store, maintain, and search  even
  under ${O}(n/\log^{1+\delta} n)$ churn  {\em per round}. Our  algorithms require
only polylogarithmic  in $n$  bits to be processed and sent (per round) by each node.
  \end{enumerate}
    
To the best of our knowledge, our algorithms are the first-known, fully-distributed storage and search algorithms that provably work under highly dynamic settings (i.e., high churn rates per step). 
Furthermore, they are  localized (i.e., do not require any global topological knowledge) and scalable.  
A technical contribution of this paper, which may be of independent interest, is  showing how random walks 
can be provably used to derive scalable distributed algorithms in dynamic networks with adversarial node churn.
\end{abstract}
\onlyShort{
\category{F.2.2}{Theory of Computation}[Analysis of Algorithms and Problem Complexity---Nonnumerical Algorithms and Problems]
\terms{Theory, Algorithms}
\keywords{Peer-to-Peer network,
  Dynamic network, Search, Storage, Distributed algorithm, Randomized algorithm, Random Walks, Expander graph.}

}

\onlyLong{\noindent {\bf Keywords:}  Peer-to-Peer network,
  Dynamic network, Search, Storage, Distributed algorithm, Randomized algorithm, Random Walks, Expander graph.}

\onlyLong{
\end{titlepage}
}
\section{Introduction}
Peer-to-peer (P2P) computing is emerging as one of the key networking technologies in recent years with many application systems, e.g., Skype, BitTorrent, Cloudmark,  CrashPlan, Symform etc. 
For example,  systems such as CrashPlan \cite{crashplan} and Symform \cite{symform} are relatively recent P2P-based storage services that  allow data to be stored and retrieved
among peers \cite{p2p-storage-survey}. Such data sharing among peers  avoids
costly centralized storage and retrieval, besides being inherently scalable to millions of peers.  However, many of these systems are not fully P2P; they also use dedicated centralized servers in order to guarantee high availability of data --- this  is necessary due to the highly dynamic and unpredictable nature of P2P.  Indeed, a key reason for the lack of fully-distributed P2P systems is  the difficulty in designing highly robust algorithms for large-scale dynamic P2P networks. 

P2P networks  are  highly dynamic  networks  characterized by  high degree of node {\em  churn}  --- i.e., nodes continuously join and leave the network. Connections (edges) may be added or deleted at any time and thus the topology changes very dynamically. In fact,  
measurement studies of real-world P2P networks~\cite{FPJKA07,SGG02,SW02, SR06} show that the churn rate is quite high:
nearly 50\% of peers in real-world networks can be replaced within an hour. (However,  despite a large churn rate, these studies  also show  that the
total number of peers in the network is relatively {\em stable}.) 
P2P algorithms have been proposed
for a wide variety of  tasks such as data storage and retrieval \cite{p2p-storage-survey, dr01, past01, malugo10, hasan05}, collaborative
filtering~\cite{Canny02}, spam detection~\cite{Cloudmark},
data mining~\cite{DBGWK06}, worm detection and
suppression~\cite{MS05,VAS04}, privacy protection of archived
data~\cite{GKLL09}, and recently, for cloud computing services as well \cite{symform, ozlap12}.  
 However, all algorithms  proposed for these
problems have no theoretical guarantees of being able to
work in a dynamically changing network with a very high churn rate, which can be  as much as  linear (in the network size)  per round. This is  a major bottleneck in implementation and
wide-spread use of P2P systems.

In this paper, we take a step towards  designing provably robust  and scalable algorithms for large-scale dynamic P2P networks. 
In particular, we focus on the fundamental problem of  storing, maintaining, and searching  data in  P2P networks.  Search in P2P  networks
is a well-studied fundamental application with a large body of work in the last decade or so, both in theory and practice (e.g., see the survey \cite{LCP+04}). 
While many P2P systems/protocols have been proposed  for efficient search and storage of data  (cf. Section \ref{sec:related}), a major drawback of almost all these  is the lack of algorithms that  work with provably guarantees under a large amount of churn per round.  The problem is especially challenging since the goal 
is to guarantee that  almost all nodes\footnote{In sparse, bounded-degree networks, as assumed in this paper,
an adversary can always isolate some number of nodes due to churn, hence ``almost all"  is
the best one can hope for in such networks.}  are able to efficiently store, maintain, and retrieve data,  even under high churn rate.   
In such a highly dynamic setting, it is non-trivial to even just  store data in a persistent manner; the churn can simply remove a large fraction of nodes in just one time step.  On the other hand,
it is costly to replicate too many copies of a data item  to guarantee persistence. Thus the challenge
is to use as little storage as possible and maintain the data for a long time, while at the same time designing efficient search algorithms that find
the data  quickly, despite high churn rate. Another complication to this challenge is designing
algorithms that are also scalable, i.e., nodes that process and send only a small number of (small-sized) messages per round.

\subsection{Our Main Results}
 We provide  a
rigorous theoretical framework for the design and analysis of storage, maintenance, and retrieval algorithms for
highly dynamic distributed systems with churn. We briefly describe the key ingredients of our model here. (Our model is described in detail in  Section
\ref{sec:model}). Essentially, we model a P2P network as a bounded-degree expander graph   whose topology --- both nodes and edges ---  can change arbitrarily from round to round and is controlled by an adversary. However, we assume
that the total number of nodes in the network is stable.  The number of node
changes {\em per round} is called the {\em churn rate} or {\em churn limit}. We
consider a churn rate of up to some ${O}(n/\log^{1+\delta} n)$\footnote{Throughout this paper, we use $\log$ to represent natural logarithm unless explicitly specified otherwise.}, where $\delta > 0$ is any small constant and $n$ is the stable network size. Note that our model is quite general in the
sense that we only assume that the topology is an expander at every step; no other special
properties are assumed. Indeed,  expanders have been used extensively to
model dynamic P2P networks\footnote{A number of works on static networks  have used expander graph topologies to  solve the agreement and related problems \cite{KKKSS10, DPPU88, Upfal94}. Here we show that similar expansion properties are beneficial in the more challenging setting of dynamic networks (cf. Section \ref{sec:related}).}
    in which the expander property is preserved under insertions and
deletions of nodes (e.g., \cite{APRU12,LS03,PRU01}).   Since we do not make assumptions on how the topology is preserved,
 our model is applicable to all such expander-based networks. 
  (We note that  various prior work on  dynamic network models (e.g., \cite{AKL08, kuhn+lo:dynamic, APRU12,  SMP12}) make similar assumptions on preservation of  topological  properties  --- such as connectivity, high expansion etc. ---
 at every step  under  insertions/deletions --- cf. Section \ref{sec:related}.  The issue
 of how such properties are preserved are abstracted away from the model, which allows one to focus on
 the dynamism. Indeed, this abstraction has been a feature of most dynamic
 models e.g., see the survey  of \cite{santoro}.)

 Our main contributions are efficient randomized  distributed algorithms for searching, storing, and maintaining data  in dynamic P2P networks. Our algorithms  succeed with high probability (i.e., with probability $1 - 1/n^{\Omega(1)}$, where $n$ is the stable network size)) even under high
adversarial churn in a polylogarithmic number of
rounds.   In particular, we
present the following results (the precise theorem statements are given in Section \ref{sec:storage}):
\begin{compactenum}
\item (cf.\ Theorem \ref{thm:storage})  A storage and maintenance algorithm  that guarantees, with high probability, that  data items  can be  
  efficiently stored (with only $\Theta(\log{n})$ copies of each data item \footnote{Using erasure coding techniques,
  the number of bits stored can be reduced even further, so as to incur only a constant factor overhead. We discuss
  this in\onlyLong{ Section \ref{sec:erasure-codes}.}\onlyShort{ in the full version of the paper.}}) and
  maintained in a dynamic P2P network with churn rate up to ${O}(n/\log^{1+\delta} n)$ {\em per round},  assuming that the churn is controlled by an
  oblivious adversary (that has  complete knowledge and control of what nodes join
  and leave and  at what time  and has unlimited computational power, but is
  oblivious to the random choices made by the algorithm).
    \item 
  (cf.\ Theorem \ref{thm:retrieval})  A  randomized distributed
  search algorithm that with high probability guarantees that  searches from as many as $n - o(n)$ 
  nodes  succeed in ${O}(\log n)$-rounds under up to ${O}(n/\log^{1+\delta} n)$ churn  {\em per round}.
    Our search algorithm together with the storage and maintenance algorithm guarantees that
  as many as $n -o(n)$ nodes can efficiently store, maintain, and search  even
  under ${O}(n/\log^{1+\delta} n)$ churn  {\em per round}. 
  Our  algorithms require
only polylogarithmic in $n$ bits to be processed and sent (per round) by each node.
     \end{compactenum}
    
To the best of our knowledge, our algorithms are the first-known,
fully-distributed storage and search algorithms that work under
highly dynamic settings (i.e., high churn rates per step). Furthermore,
they are  localized (i.e., do not require any global topological
knowledge) and scalable.  
\subsection{Technical Contributions}
We derive techniques   (cf.\ Section
\ref{sec:soup}) for doing  scalable distributed computation in highly dynamic  networks. 
In such networks,
 we would like  distributed algorithms to  work correctly and efficiently, and terminate even in networks that keep changing continuously over time (not assuming any eventual stabilization). 
The main technical tool that we use  is random walks. Flooding techniques (which proved useful 
in solving the agreement problem under high adversarial churn \cite{APRU12}) are not useful for search as they generate lot
of messages and hence are not scalable.
  We note that random walks have been used before to perform search in P2P networks (e.g., \cite{aravind,shenker,zs+06,gms05}) as well
  for other applications  such as sampling (e.g., \cite{ganesh07,brahms}), but these are not applicable 
  to  dynamic  networks with large  churn rates. 

One of the main technical contributions of this paper is showing how  random walks can be used in a dynamic network with high adversarial {\em node churn} (cf. Section \ref{sec:soup}).
The basic idea is quite simple and is as follows. All nodes  generate tokens   (which contain the source node's ids) and send it via random walks continuously over time. These random walks, 
once they ``mix" (i.e., reach close to the stationary distribution), reach essentially ``random"
destinations in the network; we (figuratively)  call these simultaneous random walks
as a {\em soup of random walks}. Thus the destination nodes  receive a steady stream of tokens from essentially random nodes, thereby allowing them to sample nodes uniformly from the network. While this  is easy to establish in a static network, it is no longer true
in a dynamic network with adversarial churn --- the churn can cause many
random walks to be lost  and also might introduce  bias.
We show a technical result called the ``Soup Theorem" (cf. Theorem \ref{thm:soup}) that shows that ``most" random walks do mix (despite large adversarial churn) 
and have the usual desirable properties as in a static network. We use the Soup theorem 
crucially in   our search, storage, and maintenance algorithms.  We note that
our technique can handle churn only up to $\churnlimit$. Informally, this is due to the fact that
at least  $\Omega(\log n)$ rounds are needed for the random walks to mix, before which any non-trivial computation can be performed. This seems to be a fundamental limitation of our random walk based method. We come close to this limit in that we allow churn to be as high as $O(n/\log^{1+\delta} n)$ for {\em any} fixed $\delta > 0$.

Another technique that we use as a building block in our algorithms is construction and maintenance
of (small-sized) committees. A  committee is a  clique of small ($\Theta(\log n)$) size composed  of essentially
``random" nodes.  We show how such a committee can be efficiently constructed, and more importantly,
{\em maintained} under large churn.  A committee can be used to ``delegate" a storage or a search operation;
its small size guarantees scalability, while its persistence guarantees that the operation will complete successfully
despite churn.
Our techniques (the Soup Theorem and committees) can be useful in other distributed applications as well, e.g., leader election.

\onlyShort{Due to space limitation, full proofs and omitted details are in the full version of the paper.}

\subsection{Related Work}\label{sec:related}

There has been significant prior work in designing P2P networks that
are provably robust to a large number of Byzantine
faults (e.g., see ~\cite{FS02,HK03,NW03,Scheideler05,AS09}). These  focus on robustly enabling storage and retrieval of data items under adversarial nodes. However, these algorithms will not work in a highly dynamic setting with {\em large,  continuous, adversarial} churn (controlled by an all-powerful adversary that  has full control of the network topology, including full knowledge and control of what nodes join
  and leave and  at what time  and has unlimited computational power). Most prior works develop algorithms that will work under the assumption that 
the network will eventually stabilize and stop changing. (An important aspect of our algorithms  is that they will work and terminate correctly even when the
  network keeps continually changing.)  There has been a lot of work on P2P  algorithms for maintaining desirable properties (such as  connectivity, low diameter, bounded degree) under churn (see e.g., \cite{PRU01, JP12, KSW10}, but these don't work under  large adversarial  churn rates.
  In particular, there has been very little work till date that rigorously addresses distributed computation in dynamic P2P networks under high node churn. The work (\cite{KSSV06})
raises the open question of whether one can design robust P2P  protocols that
can work in highly dynamic networks with a large adversarial churn.  The recent work of \cite{APRU12} was one of the first to address the above question; its focus was on solving the fundamental agreement problem in P2P networks with very large adversarial churn. However,
the paper does not address the problem of search and storage, which was a problem left open
in \cite{APRU12}.

\onlyLong{
There has been significant work in the design  of P2P systems for doing efficient search. These can be classified into two categories --- (1) Distributed Hash Table(DHT)-based schemes (also called ``structured" schemes) and (2) unstructured schemes; we refer to \cite{LCP+04} for a detailed survey.
However much of these systems have no provable performance guarantees under large adversarial churn.
DHT schemes  create a fully decentralized index that maps data
items to peers  and allows a peer to search for a data item
efficiently without flooding.    In unstructured networks, there is no relation
between the data identifier and the peer where it resides. 
 There also have been a lot of work on search in unstructured network topologies, see e.g., \cite{shenker, aravind} and the references therein.  Our algorithms assume an unstructured network.
}

There has been works  on building fault-tolerant Distributed Hash Tables (which
are classified as ``structured" P2P networks unlike ours which are ``unstructured" e.g., see \cite{PRU01}) under  different deletion models --- adversarial deletions and stochastic deletions. The structured P2P network described by Saia {\em et
al.}~\cite{SFG+02} 
guarantees that a large number of data items are available even if a large fraction of
{\em arbitrary} peers are deleted, under the assumption that, at any time, the number
of peers deleted by an adversary must be smaller than the number of peers joining.
 \onlyLong{Kuhn et al. consider in  \cite{KSW10} that up to $O(\log n)$ nodes (adversarially chosen) can crash or join per constant number of time
steps. Under this amount of churn, it is shown in \cite{KSW10} how to maintain a low peer
degree and bounded network diameter in P2P systems by using the hypercube and
pancake topologies.  Scheideler and Schmid show in \cite{SS09} how to maintain a
distributed heap that allows join and leave operations and, in addition, is
resistent to Sybil attacks. A robust distributed implementation of a distributed
hash table (DHT) in a P2P network is given by \cite{AS09}, which can withstand
two important kind of attacks: adaptive join-leave attacks and adaptive
insert/lookup attacks by up to $\eps n$ adversarial peers.  This paper
assumes that the good nodes always stay in the system and the adversarial nodes
are churned out and in, but the  {\em algorithm} determines where to insert the new nodes.
Naor and Weider \cite{NW03} describe a simple DHT scheme that is robust under the following simple random deletion model --- each node can fail independently with probability $p$. They show that their scheme can guarantee logarithmic degree, search time, and message complexity if $p$ is sufficiently small.
\onlyLong{Hildrum and Kubiatowicz \cite{HK03} describe how to modify two popular DHTs, Pastry \cite{Pastry} and Tapestry \cite{ZKJ01} to tolerate random deletions.
 Several DHT schemes (e.g., \cite{SM+01,RF+01,Koorde}) have been  shown to be robust under the simple random deletion model mentioned above.}
 There also have been works on designing fault-tolerant storage systems in a dynamic setting
 using quorums (e.g., see \cite{dynamic-quorum03, NU-quorum05}). However, these do not apply to our model of continuous churn.}

\onlyLong{To address the unpredictable and often unknown
nature of network dynamics,~\cite{kuhn+lo:dynamic} study a model
in which the communication graph can change completely from one round
to another, with the only constraint being that the network is
connected at each round.  The model of~\cite{kuhn+lo:dynamic} allows
for a much stronger adversary than the ones considered in past work on
general dynamic
networks~\cite{awerbuch+bbs:route,awerbuch+l:flow,awerbuch+bs:anycast}.
The surveys of~\cite{kuhn-survey, dynamic-survey} summarizes recent work on dynamic
networks.}
 
The dynamic network model of~\cite{kuhn+lo:dynamic, AKL08, SMP12}
allows only edge changes from round to round while the nodes remain
fixed. %
In our work, we study a dynamic
network model  where {\em both} nodes and edges
can change by a large amount.  
Therefore, the framework we study in Section~\ref{sec:model} (and first introduced in \cite{APRU12}) is more general
than the model of \cite{kuhn+lo:dynamic}, as it is additionally applicable to dynamic
settings with node churn. We note that the works of \cite{AKL08,SMP12} study random
walks under a dynamic model where the nodes are fixed (and only edges change) and hence not applicable to systems with churn.  The surveys of~\cite{kuhn-survey, dynamic-survey} summarizes recent work on dynamic
networks. %

 Expander graphs and spectral properties have already been applied
extensively to improve the network design and fault-tolerance in
distributed computing in general (\cite{Upfal94,DPPU88,BBCES2006}) and P2P networks in particular \cite{KSSV06, APRU12}. The problem of achieving almost-everywhere agreement among nodes in P2P networks --- modeled as an  expander graph --- is
considered by King et al.\ in \cite{KSSV06} in the context of the leader
election problem. However,
the algorithm of \cite{KSSV06} does not work for dynamic networks. The work of \cite{APRU12}  addresses the agreement problem in a dynamic P2P network under an adversarial  churn model where the churn rates  can be very large, up to linear in the number of nodes in the network. It also crucially makes use of expander graphs.
\onlyShort{(More related work can be found in the full paper.)}

\section{Model and Problem Statement}\label{sec:model}

\onlyLong{\subsection{Dynamic Network Model}}
We consider a synchronous  dynamic network with churn represented by a dynamically changing graph whose edges represent connectivity in the network. Our model is similar to the one introduced in \cite{APRU12}.
The computation is structured into synchronous rounds, i.e., we assume that
nodes run at the same processing speed and any message that is sent by some node
$u$ to its (current) neighbors in some round $r\ge 1$ will be received by the end of $r$.
To ensure scalability, we restrict the number of bits sent per round by each
node to be polylogarithmic in $n$, the stable network size.
In each round, up to ${O}(n/\log^{1+\delta} n)$ nodes can be replaced by new nodes, for any small constant $\delta>0$.    
Furthermore, we allow the edges to change arbitrarily in each round, but the underlying graph must be a ${\d}$-regular non-bipartite expander graph (${\d}$ can be a constant). (The regularity assumption can be relaxed, e.g.,
it is enough for nodes to have approximately equal degrees, and our results can be extended.)
The churn and edge changes are made by an adversary that is
\emph{oblivious} to the state of the nodes.  (In particular, it does not know the random choices made by the nodes.)
More precisely, the dynamic network is represented by a sequence of graphs $\mathcal{G} = (G^0, G^1, \ldots)$. 
We assume that the adversary commits to this sequence of graphs before round 0, but the algorithm is unaware of the sequence.
Each $G^r = (V^r, E^r)$ has $n$ nodes. 
We require that for all $r \ge 0$, $|V^r \setminus V^{r+1}| = |V^{r+1} \setminus V^{r}| \le {O}(n/\log^{1+\delta} n)$.
Furthermore, each $G^r$ must be a $\d$-regular non-bipartite expander with a fixed  upper bound of $\lambda$  on the second largest eigenvalue in absolute value. 

A node $u$ can communicate with any node $v$ if $u$ knows the id of $v$.\footnote{This is a typical assumption in the context of P2P  networks, where a node can establish communication with another node if it knows the other node's IP address.} When a new node joins 
the network, it has only knowledge of the ids  of  its current neighbors in the network and thus 
can communicate with them. 
We note that communication can be highly unreliable  due to churn, since  when $u$ sends a message to  $v$ there is no guarantee that $v$ is still in the network. 
However, each node in the network is guaranteed to have $\d$ neighbors in the network at any round with whom it can reliably communicate in that round. We note that random walks always use  the neighbor edges.

The network is synchronous, so nodes operate under a common clock. 
The following sequence  events occur in each round or  time step $r$. Firstly, the adversary makes the necessary changes to the network, so the algorithm is presented with graph $G^r$. So each node becomes aware of its neighbors in $G^r$. Each node then exchanges messages with its neighbors. The nodes can perform any required computation at any time. 
Each node $u$ has a unique identifier and is {\em churned in} at some round
$r_i$ and {\em churned out} at some $r_o > r_i$. More precisely, for each node
$u$, there is a maximal range $[r_i, r_o -1]$ such that for every $r \in [r_i, r_o -1]$,  $u \in V^{r}$ and
for every $r \notin [r_i, r_o-1]$, $u \notin V^r$. Any information about the
network at large is only learned through
the messages that $u$ receives. It has no a priori knowledge about who its neighbors will be in
the future. Neither does $u$ know when (or whether) it will be churned out. 
For all $r$, we assume that $|V^r| = n$, where $n$ is a suitably
large positive integer.  This assumption simplifies our analysis. Our
 algorithms can be adapted to work correctly as long as the number of nodes is
reasonably stable. Also, we assume that $n$ (or a constant factor
estimate of $n$) is common knowledge among the nodes in
the network.

%
\onlyShort{
\noindent {\bf The Storage and Search Problem.}}
\onlyLong{\subsection{The Storage and Search Problem}}
In simple terms, we want to build a robust distributed solution for the storage and retrieval of data items.
Nodes can produce data items.
Each data item is uniquely identified by an id (such as its hash value). 
When a node produces a data item, the network must be able to place and maintain copies of the data item in several nodes of the network. 
To ensure scalability, we want to upper bound the number of copies of each
data item, but more importantly, we must also replicate the data
sufficiently to ensure that, with high probability, the churn does not destroy all copies of a data item.
When a node $u$ requires  a data item (whose id, we assume, is known to the node),  it must be able to access the data item within a bounded amount of time. 
To keep things simple, we only require that $u$ knows the id of {\em a} node (currently in the network) that has the data item $u$ needs.
We ideally want an arbitrarily large number of data items to be supported.  

\section{Random Walks Under Churn}  
\label{sec:soup}
As a building block for our solution to the storage and search problem, we study some basic properties of random walks in dynamic networks with churn. It is well-known that random walks on expander graphs exhibit fast mixing time, thus allowing near uniform sampling of nodes in the network. This behavior quite easily extends to expander networks in which edges change dynamically, but nodes are fixed~\cite{SMP12}. It is more challenging to obtain such characteristics under networks under adversarial node churn. One issue is that random walks may not survive. The more challenging issue is that adversarial churn may bias the random walks, which in turn will bias sampling of nodes. We address both issues in our analysis. 
In particular, we show that for any time $t$, most of the random walks that were generated at time $t$ survive up to time $t+O(\log n)$ and at that time the surviving walks are close to uniformly distributed among the existing nodes. 
\onlyLong{
We also show on the flip side that the origin of a random walk that survived in the network for $O(\log n)$ rounds is uniformly distributed among nodes that existed at the time of its origin.  We assume the churn rate to be at most   $4n/\log^k n$; we show that $k$ can be of the form $1+\delta$ for any fixed  $\delta>0$. We begin by defining some terms and establishing some notations. }

Let $\frac{1}{n}{\bold 1}$ be the uniform distribution vector that assigns  probability $1/n$ to each of the $n$ nodes in $G^t$. We define $\pi(\G, s, t, t_0)$, $t \ge t_0 \ge 0$, to be the probability distribution vector of the position of a random walk in round $t$,  given that the random walk started at $s \in G^{t_0}$ in round $t_0$ and proceeded to walk in the dynamic network $\G$. For our purposes, we will restrict our attention to random walks that start in round $0$, so, for convenience, we use $\pi(\G,s,t)$ to refer to $\pi(\G,s,t,0)$. The  component $\pi_d(\G,s,t)$ refers to the probability that the random walk is at $d \in V^t$ in round $t$. Since the random walk could have been terminated because of churn,  we use  $\pi_*(\G,s,t) = 1 - \sum_{d \in V^t}\pi_d(\G,s,t)$ to denote the probability that the random walk did not survive until round $t$. We are now ready to present a key ingredient in our algorithms, namely the Soup Theorem, which may also be of independent interest in dynamic graphs (and not just P2P \footnote{In particular, the Soup theorem applies to any expander topology model with churn. It can also
be extended to general connected networks, although the bounds will depend on the dynamic mixing time \cite{SMP12}  of the underlying dynamic network.})  with churn. 
\begin{theorem}[Soup Theorem] \label{thm:soup}
Suppose that the churn is limited by $4n/\log^k n$, where $k=1+\delta$ for any 
fixed $\delta > 0$.
With high probability, there exists a set of nodes $\core \subseteq V^0 \cap 
V^{2 \tau}$,  with cardinality at least $n - 
\frac{8n}{\log^{(k-1)/2} n}$ such that for any $s \in \core$ and $d \in \core$, 
a random walk that starts in $s$ terminates in $d$ (in round 
$2\tau$) with probability in $[1/17n, 3/2n]$.
\end{theorem}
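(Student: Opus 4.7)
The plan is to pick $\tau=\Theta(\log n)$ large enough that a non-lazy random walk on any $\d$-regular $\lambda$-expander (non-bipartite, since the model assumes so) satisfies the standard spectral estimate $\|p_\tau-\mathbf{u}\|_2\le\lambda^\tau\le n^{-10}$, and then to define $\core$ by starting from $A = V^0\cap V^{2\tau}$ and removing a small set of ``bad sources'' and ``bad destinations.'' The cardinality of $A$ already far exceeds the target: at most $2\tau\cdot 4n/\log^k n = O(n/\log^{k-1}n)$ original nodes can be churned out over $2\tau$ rounds, which is $o(n/\log^{(k-1)/2}n)$, so all of the slack of $8n/\log^{(k-1)/2}n$ is reserved for the source/destination culling.

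The source side rests on a single averaging argument. Let $P_r$ denote the (sub-stochastic) round-$r$ transition matrix, with zero rows for nodes in $V^{r-1}\setminus V^r$ (the ``death'' rows) and whose restriction to the persistent block $V^{r-1}\cap V^r$ coincides with the symmetric expander transition matrix of $G^r$. Because each $P_r$ has column sums at most $1$, induction gives $\mathbf{1}^T P_1\cdots P_r\le\mathbf{1}^T$ componentwise, so for every node $u$ and every round $r$, $\sum_{s\in V^0}\Pr[\text{walk from }s\text{ is at }u\text{ just before round }r]\le 1$. Summing this inequality over the at most $4n/\log^k n$ nodes churned out in each of the $2\tau$ rounds gives $\sum_{s\in V^0}\Pr[\text{walk from }s\text{ dies by }2\tau]\le 2\tau\cdot 4n/\log^k n=O(n/\log^{k-1}n)$. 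A Markov step then isolates at most $O(n/\log^{(k-1)/2}n)$ sources whose individual death probability exceeds $\log^{-(k-1)/2}n$. By the symmetry of the persistent blocks of $P_r$, transposing the product shows that $(P_1\cdots P_{2\tau})_{sd}$ equals the corresponding entry for the time-reversed dynamic graph $(G^{2\tau},\ldots,G^1)$, which is itself an oblivious dynamic expander under the same churn bound, so the same argument applied to the reversed sequence yields at most $O(n/\log^{(k-1)/2}n)$ bad destinations. Setting $\core$ to $A$ minus these two sets gives $|\core|\ge n-8n/\log^{(k-1)/2}n$ after absorbing constants.

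It then remains to check, for $s,d\in\core$, the two-sided bound $\pi_d(\G,s,2\tau)\in[1/17n,3/2n]$. By construction, the walk from $s$ survives with probability at least $1-\log^{-(k-1)/2}n$, and conditioned on survival I expect the distribution to lie within $\ell_\infty$ distance $n^{-5}$ of uniform on $V^{2\tau}$ via a dynamic spectral contraction of the form $\|q_r-\mathbf{u}_r\|_2\le\lambda\|q_{r-1}-\mathbf{u}_{r-1}\|_2 + O(1/\log^k n)$, where $\mathbf{u}_r$ is the uniform distribution on $V^r$ and the additive slack absorbs the mass displaced by churn each round. Iterating over $2\tau$ rounds leaves a residual of order $\lambda^\tau+o(1/n)$, and multiplying by the survival probability keeps $\pi_d$ well inside $[1/17n,3/2n]$ for large $n$. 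The main obstacle is precisely this last step: because the transition matrices act on shifting node sets, the uniform direction is not preserved across rounds, so care is required to track the decomposition $q_r=c_r\mathbf{u}_r+w_r$ with $w_r\perp\mathbf{u}_r$ and to show that the interface terms arising from the at most $4n/\log^k n$ ``new'' and ``dead'' coordinates contribute only the advertised $O(1/\log^k n)$ per-round error to the contraction.
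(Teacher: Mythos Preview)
Your source-culling via the averaging bound on death probabilities and your destination-culling via time-reversal are both correct and match the paper exactly. The gap is precisely where you flag it. The contraction $\|q_r-\mathbf{u}_r\|_2\le\lambda\|q_{r-1}-\mathbf{u}_{r-1}\|_2 + O(1/\log^k n)$ cannot yield an $\ell_\infty$ residual anywhere near $n^{-5}$: summing the additive term geometrically over $2\tau$ rounds leaves a floor of order $(1-\lambda)^{-1}/\log^k n$ (or at best $O(1/\sqrt{n\log^k n})$ in $\ell_2$ with a sharper interface estimate), and either is far too large to pin down individual coordinates of size $\Theta(1/n)$. The floor is structural---every round, churn kills $\Theta(n/\log^k n)$ coordinates and injects the same number of fresh zeros, perturbing the uniform direction by essentially the same amount each step---so you cannot trade more rounds for a smaller residual. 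Your claimed final error ``$\lambda^\tau+o(1/n)$'' simply does not follow from the recursion you wrote.

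The paper bypasses this by introducing an auxiliary walk-preserving network $\bG$ in which each churned-out node is identified with a unique churned-in node, so the vertex set is fixed and only edges change. In $\bG$ the spectral contraction holds with \emph{no} additive term, giving $\pi_d(\bG,s,\tau)\in[1/2n,3/2n]$ for all $s,d$. The transfer back to $\G$ is then a one-line comparison: $\pi_d(\G,s,\tau)\le\pi_d(\bG,s,\tau)$ pointwise (this already gives the upper bound $3/2n$), and $\sum_d\bigl(\pi_d(\bG,s,\tau)-\pi_d(\G,s,\tau)\bigr)=\pi_*(\G,s,\tau)\le\log^{-(k-1)/2}n$ for a good source $s$, so at most $4n/\log^{(k-1)/2}n$ destinations $d$ can have $\pi_d(\G,s,\tau)<1/4n$. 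Because this good-destination set depends on $s$, the paper runs two $\tau$-phases and, for each pair $(s,d)$, intersects the forward good set $\overrightarrow{S}(s)\subseteq V^\tau$ with the backward good set $\overleftarrow{D}(d)\subseteq V^\tau$ at the midpoint, obtaining $(1/4-o(1))\cdot(1/4n)\ge 1/17n$ uniformly over $\core=S\cap D$. The missing idea in your proposal is this $\bG$-coupling; once you have it, the two-phase gluing replaces your direct spectral step.
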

\onlyShort{We present a high level sketch of the proof and defer the details to the full version of the paper.
\begin{proof}[sketch] With near linear churn per round, we cannot assume that a 
  random walk will survive in the network, let alone distribute evenly 
  throughout the network.  Therefore, for analysis purposes, we construct a 
  dynamic graph $\bG$ that mimics $\G$, but with the (artificially) added 
  advantage that  random walk in $\bG$ survive with probability 1. When a node 
  $v \in \G^i$ is churned out, all the random walks at $v$ are killed. Recall 
  that we have assumed that the number of nodes churned in at any round equals 
  the number of nodes churned out. To obtain $\bG$, for each $v$ that is churned 
  out, we pick a unique node $v' \in \G^{i+1}$ that was churned in at round 
  $i+1$ and place all the random walks previously at $v$ on $v'$.  The dynamic 
  network $\bG$ thus obtained satisfies  the type of network studied 
  in~\cite{SMP12}, i.e., the edges are rewired arbitrarily but the nodes are 
  unaffected. The dynamic mixing time of a random walk starting from some 
  initial distribution over the nodes of a network at time step 0 is the time it 
  takes for the random walk to be distributed nearly uniformly over the nodes of 
  the network. More formally, the dynamic mixing time
\[
T(\bG,\frac{1}{2n}) \triangleq \min \left \{ t : \max_s || \pi(\bG, s, t)
- \frac{1}{n}{\bold 1}||_\infty \le \frac{1}{2n} \right \}.
\]
In~\cite{SMP12}, Sarma {\it et al.} show that  $T(\bG, 1/2n) \in O\left ((\log 
  n)/(1 - \lambda)\right ),$
where $\lambda$ is an upper bound on the second largest eigenvalue in absolute 
value for all $\bar{G} \in \bG$.
Moreover, the expected number of random walks that are at any specific node at 
any point in time is the same as the initial token distribution.
Thus, with high probability, all tokens are able to take $T$ steps within  
$\tau \in O(\log n)$ rounds.
For the rest of the analysis, we use this fast mixing behavior of random walks 
in $\bG$ to make inferences on the behavior of random walks in $\G$.
The sequence of inferences we make are as follows.

\noindent 1. Given a dynamic graph process $\G$, we show that there is a large 
set of nodes  $S \subseteq V^0$ of cardinality at least $n-4n/\log^{(k-1)/2} n$ 
such that every  random walk generated in each of these nodes at time 0 survives 
up to the mixing time with probability $1-1/\log^{(k-1)/2}n$.  \item We then 
show that for every $s \in S$ there is a set $D(s) \in V^{\tau}$ also of 
cardinality at least $n - 4n/ \log^{(k-1)/2}n$  such that a random walk that 
starts from $s$ at time step 0 is almost uniformly distributed in $D(s)$.  More 
precisely, for any $d \in D(s)$, $1/4n \le \pi_d(\G,s,\tau) \le 
3/2n$. 

\noindent 2. Taking advantage of the reversibility of random walks in $\bG$, we 
show that there is a set $D \subseteq V^{\tau}$ such that for every $d \in D$, 
there is a set $S(d) \subseteq V^0$ again of cardinality at least 
$n-4n/\log^{(k-1)/2} n$ such that the origin of every  random walk that 
terminated in $d$ is almost uniformly distributed in $S(d)$, i.e., for any 
specific $s \in S(d)$, the random walk the probability that the random walk 
originated from $s$ lies in the range $[1/4n, 3/2n]$.

\noindent 3. Carefully combining two $\tau$ round phases, we show that there is 
a large set of nodes $\core \subseteq V^0 \cap V^{2 \tau}$ of cardinality at 
least  $n-8n/\log^{(k-1)/2} n$ such that, for any fixed pair $s, d \in \core$, a 
random walk that starts from $s$ in round 0 will reach $d$ in round $2\tau$ with 
probability in $\Theta(1/n)$ and likewise a random walk  that started in round 0 
and terminated at $d$ in round $2 \tau$ originated in $s$ with probability in 
$\Theta(1/n)$, thus proving the theorem. 
\end{proof}
The upshot of Theorem\ \ref{thm:soup} is that for any fixed period of $2 \tau$  
rounds we are guaranteed a large set \core\ of cardinality at least $n - o(n)$ 
such that random walks starting and ending in \core\ are well mixed.
}
\onlyLong{
Given a dynamic network $\G$, for the purpose of analysing random walks, we  construct a corresponding random walk preserving dynamic network $\bar{\mathcal{G}}$ as follows. 
We initialize the network by setting $\bar{G}^0$ to $G^0$. Each  $\bar{G}^t \in \bG$, $t > 0$, is essentially the same graph as $G^t$ except that we copy the state of  each node that is churned out in round $t$ on to a unique node that is churned in at round $t$. 
Intuitively, $\bG$ is a testbed network corresponding to $\G$ in which, random walks that are eliminated in $\G$ are preserved.
Notice that there is a one to one correspondence between objects (vertices, edges, and graphs) in $\G$ to the objects in $\bG$.
For notational clarity, we refer to the sequence of graphs in $\bG$ as $(\bg^0, \bg^1, \ldots)$, where each $\bg^t = (\bv^t, \be^t)$. Given a vertex $v \in V^t$ (resp. $e \in E^t$, $S \subseteq V^t$, etc), we denote the corresponding vertex in $\bv^t$ by $\bar{v}$ (resp., $\bar{e} \in \be^t$, $\bar{S} \subseteq \bv^t$, etc). 
Let $\bG$ be the random walk preserving dynamic network constructed from a  $\d$-regular non-bipartite dynamic network $\G$.  We get the following characterizations of random walks.

\begin{lemma} \label{lem:mt}
Suppose that an oblivious adversary fixes the dynamic network $\G$
of $\d$-regular non-bipartite expander graphs in advance and let $\bG$ be the corresponding
preserving dynamic network.
If each correct node $s$ starts $h\log n$ random walks of length $T\in
\Theta(\log n)$ and forwards up to $2h\log n$ random walk tokens per round, then there is a fixed $\tau =\m\log n\in \Theta(\log n)$, for
some constant $\m>0$, s.t.\  the following hold:
\begin{compactenum}
\item[\bf (A)] Every walk in $\bG$ completes $T$ steps in $\tau$ rounds w.h.p.
\item[\bf (B)] Every walk in $\bG$ that starts at some node $s$  has 
  probability in $\left[\frac{1}{2n},\frac{3}{2n}\right]$ of being at any node 
  $d$ after taking $T$ steps.  Formally, $\forall s\in \bv^0\ \forall d\in 
  \bv^T\colon
  \frac{1}{2n}\le \pi_d(\bG,s,\tau) \le \frac{3}{2n}$.
\end{compactenum}
\end{lemma}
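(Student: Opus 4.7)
The plan is to treat the two claims separately, leveraging the fact that by construction random walks in $\bG$ are never destroyed by churn; all we need to worry about is the forwarding bandwidth and the mixing behavior under dynamically changing edges. For part (B) I would directly invoke the dynamic $L_\infty$ mixing-time result from \cite{SMP12}: for any sequence of $\d$-regular non-bipartite graphs with second eigenvalue (in absolute value) uniformly bounded by $\lambda < 1$, there is some $T = \Theta(\log n/(1-\lambda))$ such that $\lVert \pi(\bG,s,T) - \tfrac{1}{n}\mathbf{1}\rVert_\infty \le \tfrac{1}{2n}$ for every starting node $s$. Choosing $\tau$ at least as large as this $T$, this immediately gives $\pi_d(\bG,s,\tau) \in [\tfrac{1}{2n},\tfrac{3}{2n}]$ for every $s, d$, which is exactly claim (B).

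For part (A), the only thing that can delay a walk is the per-round forwarding cap of $2h\log n$ tokens, so I would argue that w.h.p.\ no node ever receives more than $2h\log n$ tokens in a single round, whence each walk advances by exactly one hop per round and finishes its $T$ steps in $T \le \tau$ rounds. Since each $\bg^t$ is $\d$-regular, the uniform distribution is stationary at every time step, and starting with exactly $h\log n$ tokens per node gives an expected load of $h\log n$ at every node in every round. The positions of the $n h \log n$ walks are mutually independent random variables (each walk uses its own coin flips), so the load at a fixed node $v$ in a fixed round $t$ is a sum of independent $\{0,1\}$ indicators with mean $h \log n$. A standard Chernoff bound gives
\[
\Pr\!\left[\text{load}(v,t) > 2 h \log n\right] \;\le\; \exp(-\Omega(h\log n)) \;=\; n^{-\Omega(h)},
\]
and union-bounding over the $n$ nodes and the $\tau = \Theta(\log n)$ rounds still yields a high-probability guarantee once $h$ is a sufficiently large constant. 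Setting $\tau = \m \log n$ with $\m$ chosen large enough to exceed the $\m_0$ from the mixing-time bound (and to absorb any polylog slack) handles both parts with the same $\tau$.

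The main obstacle is making the bandwidth argument honest: the Chernoff calculation as sketched treats the load in round $t$ as a sum of independent indicators, which is true as long as \emph{every} previous round's forwarding succeeded (no token ever sat idle). One therefore has to set it up as an induction/coupling over rounds: condition on the event $\mathcal{E}_{t-1}$ that all loads through round $t-1$ are at most $2h\log n$, under which every walk has indeed taken $t-1$ independent random-neighbor steps and the load in round $t$ is a genuine sum of independent indicators. The Chernoff bound then controls $\mathcal{E}_t^c \mid \mathcal{E}_{t-1}$, and a union bound over $\tau = O(\log n)$ rounds keeps the failure probability at $n^{-\Omega(h)}$. A secondary but routine technicality is verifying that $\tau$ can be taken large enough to beat the mixing constant from \cite{SMP12} yet still be $\Theta(\log n)$; since $\lambda$ is a fixed constant of the expander family, this is just a choice of multiplicative constant $\m$.
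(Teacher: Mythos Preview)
Your proposal is correct and follows essentially the same line as the paper: invoke the dynamic mixing result of \cite{SMP12} for (B), and for (A) use regularity to get expected load $h\log n$ per node per round, then Chernoff plus a union bound over nodes and $\Theta(\log n)$ rounds, setting $\m = 2h$. Your inductive conditioning on $\mathcal{E}_{t-1}$ to justify the independence needed for Chernoff is in fact more careful than the paper's own argument, which simply asserts the expected load without making this step explicit.
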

\begin{proof}
For any $c>0$, we know by \cite{SMP12} that after taking $T=T(c) \in \Theta(\log n)$ steps in
a dynamic $\d$-regular expander network with a changing edge topology, a random walk
token has probability 
$[\frac{1}{n}-\frac{1}{n^{1+c}},\frac{1}{n}+\frac{1}{n^{1+c}}]$ of
being at any particular node in step $T$, which shows (B).

Recalling that all graphs $\bG$ are $\d$-regular and that initially each node
generates $h\log n$ tokens, the expected number of tokens received by a node is
$h\log n$, in any round $r$.
Applying a standard Chernoff bound, it follows that with probability $\ge
1-n^{-4}$, each  node receives at most $2h\log n$ tokens in
$r$.
Taking a union bound over $2h\log n$ rounds and all  nodes, the same is
true at every correct node during rounds $[1,2h\log n]$ with probability $\ge
1-n^{-2}$.
Since each node can forward up to $2h\log n$ tokens per round, we set $\m = 2h$.
This guarantees that (w.h.p.) every token is forwarded once in every round.
Thus, with high probability, all walks complete all $T$ steps in 
$\tau=\m\log n$ rounds and satisfy the required probability bound.
\qed
\end{proof}

We call $\tau$ the \emph{dynamic mixing time} of a random walk.
It follows therefore that when every $\bg \in \bG$ is an expander with a second largest eigenvalue bounded by $\lambda$ which is is a fixed constant bounded away from 1, as we have assumed,  then the mixing time $\tau (\bG,\frac{1}{2n})  = \m \log n$, where $\m$ is a fixed constant known to all nodes in the network. In particular, 
for every $ s\in \bv^0$ and $d \in \bv^{\tau}$, 
$\frac{1}{2n} \le \pi_d(\bG, s,\tau) \le \frac{3}{2n}.$

We first show that given a dynamic graph process $\G$, there is a large set of nodes at time 0, such that a random walk generated in each of these nodes at time 0 survives up to the mixing time with probability $1-1/\log^{(k-1)/2} n$.

\begin{lemma}\label{lem:alive}
Consider a churn of $4n/\log^k n$ and let $S \triangleq \{s : (s \in V^0) \wedge (\pi_*(\G,s,\tau) \le 1/\log^{(k-1)/2} n)\}$. 
Then, $|S| \ge n - 4 n/\log^{(k-1)/2} n$.
\end{lemma}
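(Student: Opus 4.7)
The plan is to bound the aggregate death probability $\sum_{s \in V^0} \pi_*(\G, s, \tau)$ across all starting nodes, and then deduce by a simple counting (Markov-style) argument that only a small fraction of individual nodes can have large death probability.

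First, I would exploit the construction of the random-walk-preserving network $\bG$. Because each $\bg^t$ is $\d$-regular, its transition matrix is doubly stochastic, so the uniform distribution is preserved under every step. Consequently, if one random walk is placed at each node of $\bv^0$, the expected number of walks at any node $v$ in any round $t$ remains exactly $1$; equivalently,
$$\sum_{s \in \bv^0} \pi_v(\bG, s, t) = 1 \quad \text{for all } v \in \bv^t,\ t \ge 0.$$

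Next, I would relate death in $\G$ to trajectories in $\bG$. Let $C^t \subseteq V^t$ denote the set of nodes churned out at round $t$; by hypothesis $|C^t| \le 4n/\log^k n$. Because the walk from $s$ proceeds identically in $\G$ and $\bG$ up to (and including) its first death, it can die by round $\tau$ in $\G$ only if, for some $t \le \tau$, the corresponding walk in $\bG$ occupies a vertex of $C^t$ in round $t$. A union bound therefore yields
$$\pi_*(\G, s, \tau) \le \sum_{t=1}^{\tau} \sum_{v \in C^t} \pi_v(\bG, s, t).$$
Summing over $s \in V^0$, interchanging the order of summation, and applying the stationarity identity from the previous step gives
$$\sum_{s \in V^0} \pi_*(\G, s, \tau) \le \sum_{t=1}^{\tau} \sum_{v \in C^t} 1 = \sum_{t=1}^{\tau} |C^t| \le \tau \cdot \frac{4n}{\log^k n} = O\!\left(\frac{n}{\log^{k-1} n}\right),$$
where the last estimate uses $\tau = \Theta(\log n)$ from Lemma \ref{lem:mt}.

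A Markov-type counting step then finishes: the number of $s \in V^0$ with $\pi_*(\G, s, \tau) > 1/\log^{(k-1)/2} n$ is at most the total mass divided by the threshold, i.e., $O\!\left(n/\log^{(k-1)/2} n\right)$; absorbing the implicit constant into the $4$ appearing in the statement yields $|S| \ge n - 4n/\log^{(k-1)/2} n$. The main subtlety I would take care with is the coupling between the $\G$-walk and the $\bG$-walk: one must verify that the event ``the $\G$-walk dies at round $t$'' is contained in ``the $\bG$-walk is at some vertex of $C^t$ in round $t$,'' which follows directly from the definition of $\bG$ (it continues those walks that would have died in $\G$), so the two processes agree up to the round of the first death. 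Once that coupling is in hand, the remainder of the proof is stationarity on a regular graph plus elementary counting.
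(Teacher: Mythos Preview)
Your proof is correct and follows the same approach as the paper: bound $\sum_{s\in V^0} \pi_*(\G,s,\tau)$ via stationarity of the uniform distribution in the $\d$-regular preserving network $\bG$, then apply a Markov-style counting argument. Your explicit coupling/union-bound step makes rigorous what the paper compresses into a single sentence, and both arguments share the same loose handling of the constant coming from $\tau=\Theta(\log n)$.
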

\onlyLong{
\begin{proof}
Start one random walk from each node in $V^0$.
Each $\bg \in \bG$ being $\d$-regular, the expected number of random walks on any node in $\bG$ at any round is 1. 
Therefore, in $\G$, after $\tau \in O(\log n)$ rounds, the expected number of random walks that will be eliminated is 
\begin{equation}\label{eqn:numKilled}
\sum_{s \in V^0} \pi_*(\G,s,\tau) \le \frac{4 n}{ \log^{k-1} n},
\end{equation}
for some suitable constant $c$.
Since $\pi_*(\G,s,\tau) \ge 1/\log^{(k-1)/2} n$ for every  $s \in V^0 \setminus S$, we get 
\[\frac{1}{\log^{(k-1)/2} n} |V^0 \setminus S| \le \frac{4 n}{ \log^{k-1} n}.\] 
This implies that $|V^0 \setminus S| \le 4 n/\log^{(k-1)/2} n$.
Since $|V^0| = n$, the lemma follows.
\end{proof}
}%

Consider a random walk that started at time 0 from any $s\in S$. We now endeavor
to show that if the random walk survives for the dynamic mixing time $\tau$ then its destination will be close to uniformly distributed.
\begin{lemma}\label{lem:mix}
Suppose that we have a  churn limit of $4n/\log^k n$, where $k = 1+\delta$ for 
any fixed $\delta > 0$.
With high probability, there exists a set $S \subseteq V^0$ (as defined in 
Lemma~\ref{lem:alive}) of cardinality at least $n - \frac{4n}{\log^{(k-1)/2} 
  n}$ with the property  that for every $s \in S$, there exists a $D(s) 
\subseteq V^{\tau}$ of cardinality at least $n - 
\frac{4n}{\log^{(k-1)/2} n}$ such that for every $d \in D(s)$, $1/4n \le 
\pi_d(\G,s,\tau) \le 3/2n$.
\end{lemma}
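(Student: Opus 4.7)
The plan is to exploit the natural coupling between random walks in $\G$ and $\bG$: the two executions agree exactly as long as the $\G$-walk has not yet stepped onto a churned-out node, and by construction the $\bG$-walk keeps going (on the mirror node) precisely in the cases where the $\G$-walk would die. Take $S$ to be the set from Lemma~\ref{lem:alive}, so that $|S| \ge n - 4n/\log^{(k-1)/2} n$ and every $s \in S$ satisfies the survival bound $\pi_*(\G, s, \tau) \le 1/\log^{(k-1)/2} n$.

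Fix any $s \in S$. For every $d \in V^\tau = \bv^\tau$ the coupling gives $\pi_d(\G, s, \tau) \le \pi_d(\bG, s, \tau)$, and the deficit
$\Delta_d(s) := \pi_d(\bG, s, \tau) - \pi_d(\G, s, \tau) \ge 0$
measures precisely the probability that a $\bG$-walk from $s$ ends at $d$ having been revived at least once along the way. Summing over $d \in V^\tau$ gives $\sum_{d \in V^\tau} \Delta_d(s) = \pi_*(\G, s, \tau) \le 1/\log^{(k-1)/2} n$, because $\sum_d \pi_d(\bG,s,\tau) = 1$ (all walks survive in $\bG$) while $\sum_d \pi_d(\G,s,\tau) = 1 - \pi_*(\G,s,\tau)$.

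The upper bound $\pi_d(\G, s, \tau) \le 3/(2n)$ is now immediate from Lemma~\ref{lem:mt}(B). For the lower bound, apply a Markov-style counting argument: if $\pi_d(\G, s, \tau) < 1/(4n)$ for some $d \in V^\tau$, then since $\pi_d(\bG, s, \tau) \ge 1/(2n)$ (Lemma~\ref{lem:mt}(B)) we have $\Delta_d(s) > 1/(4n)$. The number of such ``bad'' nodes is therefore at most $\big(1/\log^{(k-1)/2} n\big)/(1/(4n)) = 4n/\log^{(k-1)/2} n$. Let $D(s)$ be the complement of this bad set inside $V^\tau$; then $|D(s)| \ge n - 4n/\log^{(k-1)/2} n$ and $\pi_d(\G, s, \tau) \in [1/(4n), 3/(2n)]$ for every $d \in D(s)$, as required.

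The one subtle point I expect to need care is the identity $\sum_d \Delta_d(s) = \pi_*(\G, s, \tau)$, which relies on a clean round-by-round formalization of the coupling (e.g.\ defining the $\bG$-trajectory from the $\G$-trajectory by forwarding to the designated mirror node whenever the $\G$-walk would be killed). Once that accounting is laid out, everything else is elementary. The high-probability qualifier in the statement is inherited directly from Lemma~\ref{lem:mt}(A): conditional on the mixing event there (which holds w.h.p.), the mixing bounds of Lemma~\ref{lem:mt}(B) and Lemma~\ref{lem:alive} apply deterministically, and so does the counting argument above.
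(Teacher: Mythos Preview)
Your proof is correct and follows essentially the same route as the paper: define the deficit $\Delta_d(s)=\pi_d(\bG,s,\tau)-\pi_d(\G,s,\tau)$, use the coupling to get $\Delta_d(s)\ge 0$ and $\sum_d \Delta_d(s)=\pi_*(\G,s,\tau)$, then apply a Markov-type count with the $1/(2n)$ lower bound from Lemma~\ref{lem:mt}(B) to bound the number of nodes with $\pi_d(\G,s,\tau)<1/(4n)$. Your identification of the one subtle point (the summation identity) and of where the ``with high probability'' enters matches the paper's reasoning as well.
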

\onlyLong{
\begin{proof}
A random walk that survives in $\G$ also survives in $\bG$, thus for every $s \in V^0$ and $d \in V^{\tau}$, 
\[
\pi_d(\G,s, \tau) \le \pi_d(\bG,s, \tau) \le 3/2n.
\]
The crux of the proof is showing that a random walk from $s\in S$ reaches a large number of nodes in $V^{\tau}$, each with probability at least $\frac{1}{4n}$.
From Lemma\ \ref{lem:alive}, for every $s \in S$, $\pi_*(\G,s,\tau) \le 1/\log^{(k-1)/2} n$.  Consider a walk that started at time 0 at $s\in S$. Summing over 
all the possible locations of the walk at time $\tau$ we have  
\begin{equation}\label{eqn:summation}
\sum_{d \in V^{\tau}} (\pi_d(\bG,s, \tau) - \pi_d(\G,s, \tau)) = \pi_*(\G,s,\tau) \le 1/\log^{(k-1)/2} n.
\end{equation}
To lower bound $|D(s)|$, we upper bound the cardinality of the complement set \[\hat{D} \triangleq V^{\tau} \setminus D(s) = \{ d : (d \in  V^{\tau}) \wedge (\pi_d(\G, s, \tau))  < 1/4n)\}.\] Since $\hat{D} \subset V^{\tau}$, we can restrict the summation in Equation~\ref{eqn:summation} to the elements of $\hat{D}$ and get 
\begin{equation}
\label{fb}
\sum_{d \in \hat{D}} (\pi_d(\bG,s, \tau) - \pi_d(\G,s, \tau) ) \le 1/\log^{(k-1)/2} n.
\end{equation}
By Lemma~\ref{lem:mt}.(A), with high probability all walks have completed $\tau$
steps with high probability, and, by Lemma~\ref{lem:mt}.(B), $\pi_d(\bG,s, \tau)
\ge 1/2n$ for every $d \in  \bar{V}^{\tau}$, but $\pi_d(\G,s, \tau) \le 1/4n$ 
for every $d \in \hat{D}$.
Thus, for any $s\in S$ and $d\in \hat{D}$, $$\pi_d(\bG,s, \tau) - \pi_d(\G,s,
\tau)\geq \frac{1}{4n}.$$
Plugging to equation~(\ref{fb}) we have
$
|\hat{D}| \le (1/\log^{(k-1)/2} n)/(1/4n) = 4n/\log^{(k-1)/2} n
$,
thus establishing the lemma.
\end{proof}
}%

In Lemma~\ref{lem:mix}, we studied the distribution of the destination of random walks originating from a large set $S \subset V^0$. Similarly, in Lemma~\ref{cor:reverse}, we formalize our understanding of the origin of  random walks that terminate in some large set $D\subseteq V^{\tau}$. 
\onlyShort{The proof of the following lemma follows quite immediately; see full version for complete proof.}
\begin{lemma}[Reversibility of Random Walks] \label{cor:reverse}
Suppose that the churn is limited by $4n/\log^k n$.
With high probability, there exists a set $D \subseteq V^{\tau}$ of 
cardinality at least $n - \frac{4n}{\log^{(k-1)/2} n}$ such that, given a node 
$d \in D$, there is a set $S(d) \subseteq V^0$ of cardinality at least $n - 
\frac{4n}{\log^{(k-1)/2} n}$ such that a random walk that terminated in $d$ 
originated in any $s \in S(d)$ with probability in the range $[1/4n, 3/2n]$.
\end{lemma}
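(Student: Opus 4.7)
The plan is to reduce Lemma~\ref{cor:reverse} to Lemma~\ref{lem:mix} by exploiting the time-reversibility of random walks on $\d$-regular graphs. Specifically, I will apply Lemma~\ref{lem:mix} to a time-reversed copy of the dynamic network and then use reversibility to translate the conclusion back to the original direction.

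First, I set up the reversed dynamics. Let $\G^{\mathrm{rev}} = (G^{\tau}, G^{\tau-1}, \ldots, G^0)$ denote the reversal of $\G$ over the window $[0,\tau]$. Since each $G^t$ is a $\d$-regular non-bipartite expander and the churn bound $4n/\log^{k} n$ is symmetric in time, $\G^{\mathrm{rev}}$ is itself a valid instance of our model; in particular, its preserving counterpart $\bG^{\mathrm{rev}}$ satisfies the hypotheses of Lemma~\ref{lem:mt}, so all of Lemmas~\ref{lem:alive} and~\ref{lem:mix} apply to it verbatim.

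The crux of the argument is the identity, for every $s\in V^0$ and $d\in V^{\tau}$,
\[
\pi_d(\G, s, \tau) \;=\; \pi_s\bigl(\G^{\mathrm{rev}}, d, \tau\bigr).
\]
This is established by a path-by-path argument: the one-step transition probabilities of a random walk on any $\d$-regular graph are symmetric ($1/\d$ on each edge in either direction), so a surviving walk path $s=v_0,v_1,\ldots,v_\tau=d$ traversed forward in $\G$ and its reversal $d=v_\tau,v_{\tau-1},\ldots,v_0=s$ traversed forward in $\G^{\mathrm{rev}}$ both carry probability $(1/\d)^{\tau}$. The survival event is also time-symmetric: a path survives in $\G$ precisely when every visited node $v_i$ is present in $V^i\cap V^{i+1}$, and this is exactly the condition for its reversal to survive in $\G^{\mathrm{rev}}$ (a forward-direction churn-out corresponds to a backward-direction churn-in). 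Summing over paths gives the identity.

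Finally, I apply Lemma~\ref{lem:mix} to $\G^{\mathrm{rev}}$, with the roles of initial and terminal rounds swapped. This produces (w.h.p.) a set $D \subseteq V^{\tau}$ with $|D|\ge n-4n/\log^{(k-1)/2}n$ (the analogue of the set ``$S$'' in Lemma~\ref{lem:mix}) and, for each $d\in D$, a set $S(d)\subseteq V^0$ with $|S(d)|\ge n-4n/\log^{(k-1)/2}n$ (the analogue of ``$D(s)$'') such that $1/(4n) \le \pi_s(\G^{\mathrm{rev}},d,\tau) \le 3/(2n)$ for every $s\in S(d)$. By the identity above, this quantity equals $\pi_d(\G,s,\tau)$, which proves the lemma. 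The only point requiring care is the bookkeeping of survival under reversal, but once one observes that a $\d$-regular transition is its own time-reversal and that churn events map symmetrically between $\G$ and $\G^{\mathrm{rev}}$, the rest is a direct appeal to Lemma~\ref{lem:mix}.
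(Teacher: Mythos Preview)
Your proposal is correct and follows essentially the same approach as the paper: define the time-reversed dynamic network $\overleftarrow{\G}=(G^{\tau},\ldots,G^0)$, observe that $\d$-regularity makes the one-step transition probabilities symmetric so that the forward walk in $\G$ and the reversed walk in $\overleftarrow{\G}$ coincide in distribution, and then apply Lemma~\ref{lem:mix} to $\overleftarrow{\G}$ with the roles of start and end swapped. Your path-by-path justification of the identity $\pi_d(\G,s,\tau)=\pi_s(\G^{\mathrm{rev}},d,\tau)$ and of the time-symmetry of survival is a bit more explicit than the paper's, but the argument is the same.
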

\onlyLong{
\begin{proof}
Notice first that the reverse sequence of graphs $\overleftarrow{\G} = 
(G^{\tau}, G^{\tau -1}, \ldots, G^0)$ is a valid sequence of graphs that 
make up a dynamic network, albeit one that has a limited number of rounds to 
offer.  Furthermore, let $u$ and $v$ be two neighbours in $\G$ at some time $t$. 
Since $\bg^t$ is $\d$-regular, the probability of a random walk on $u$ moving to 
$v$ equals the probability of a random walk moving from $v$ to $u$. Therefore, 
to study the distribution of the origin of a random walk (in $V^0$) that 
terminated in some node $t$ in $G^{\tau}$, we can initiate a 
random walk in the same  node $t$ in $\overleftarrow{\G}$ (in round 0) and study 
the distribution of the random walk's destination in $G^0$ (in round 
$\tau$).
Lemma\ \ref{lem:mix} applies to $\overleftarrow{\G}$ implying that (w.h.p.) 
there exists  sets $D \subseteq V^{\tau}$ and $S \subseteq 
V^0$, both of cardinality at least $n -\frac{4n}{\log^{(k-1)/2} n}$,  such that 
a  random walk originating  in some $d\in D$ in round 0 of  
$\overleftarrow{\G}$ terminated   in some $s \in S$ in round 
${\tau}$. The lemma follows when we view this random walk 
property from the perspective of $\G$.
\end{proof}
}
Combining Lemmata\ \ref{lem:mix} and \ref{cor:reverse} carefully we can complete the proof of Theorem~\ref{thm:soup}.

\onlyLong{
\begin{proof}[of Theorem \ref{thm:soup}]
The upper bound of $3/2n$ on the probability that $s$ terminates in $d$ follows
quite easily from Lemma~\ref{lem:mix} when we note that the probability with
which a random walk terminates at any node does not increase over time.
Therefore, we focus on the lower bound.

First, we choose $D \subseteq V^{2 \tau}$ based on Lemma~\ref{cor:reverse} such 
that (w.h.p.) for any $d \in D$, there is a set $\overleftarrow{D}(d) \subseteq 
V^{ \tau}$ of cardinality at least $n - \frac{4n}{\log^{(k-1)/2} n}$ such that, 
for every $d' \in \overleftarrow{D}(d)$, a random walk that terminated in $d$ 
originated from $d'$ with probability at least $1/4n$.

We then choose $S \subseteq V^0$ based on Lemma~\ref{lem:mix} such that (w.h.p.) 
for any $s  \in S$, there is a set $\overrightarrow{S}(d) \subseteq V^{ \tau}$ 
of cardinality at least $n - \frac{4n}{\log^{(k-1)/2} n}$ such that, for every 
$s' \in \overrightarrow{S}$, a random walk that originated in $s$ terminates in 
$s'$ with  probability at least $1/4n$.

Notice that the cardinality of $\overrightarrow{S} \cap \overleftarrow{D}$ for any $s \in S$ and $d \in D$ is at least $n - \frac{8n}{\log^{(k-1)/2} n}$. 

We now fix a pair $(s,d)$, where $s \in S$ and $d \in D$ and consider a random walk that terminated in $d$. The random walk was in some node in $\overrightarrow{S}  \cap \overleftarrow{D}$ in round $\tau$ with probability at least  
\[\sum_{x \in \overrightarrow{S}  \cap \overleftarrow{D}} \frac{1}{4n} = (1/4) - o(1) \ge (1/4) - \hat{\varepsilon},
\] 
for any fixed $\hat{\varepsilon} > 0$.  Let us now condition our random walk on the event that the random walk was on some $x \in \overrightarrow{S}  \cap \overleftarrow{D}$ in round $\tau$. Then, it originated from $s$ with probability $1/4n$. Therefore, the random walk that terminated in $d$ originated from $s$ with probability 
$
(1/4 - \hat{\varepsilon} ) (1/4n) \ge (1/17n)
$ when $\hat{\varepsilon} \le 1/68$. The theorem follows by setting $\core 
\triangleq S \cap D$ because $|\core|$ is also at least $n - 
\frac{8n}{\log^{(k-1)/2} n}$.
\end{proof}
}%

}

%

%
%
%
\section{Storage and Search of Data}\label{sec:storage}
In this section we describe a mechanism that enables all but $o(n)$ nodes 
to persistently store data in the network.
We will assume that churn rate is $4n/\log^k n$. 
A key goal is  to tolerate as much churn as possible, hence we would like $k$ to be as small as possible. With this in mind, we again show in the analysis that $k$ can be of the form $1+\delta$ for any fixed $\delta > 0$.

A na\"{\i}ve solution is to flood the data through the network and store it at 
a linear number of nodes, which guarantees fast retrieval and persistence with 
probability $1$.
Clearly, such an approach does not scale to large peer-to-peer networks 
due to the congestion caused by flooding and the costs of storing the item 
at almost every node.
As we strive to design algorithms that are useful in large scale 
P2P-networks, we limit the amount of communication by using random walks 
instead of flooding and require only a sublinear number of nodes to be 
responsible for the storage of an item --- only $\Theta(\log n)$ of these 
nodes will actually store the item\footnote{In fact (as we noted earlier) using erasure coding techniques,
the overall storage can be limited to a constant factor overhead; see \onlyShort{the full version of the paper.}\onlyLong{Section~\ref{sec:erasure-codes}.}} whereas the other nodes serve as 
landmarks pointing to these $\Theta(\log n)$ nodes.

Suppose that node $u$ wants to store item $\I$ and assume that $u$ is part 
of the large set of nodes $\Core$ provided by Theorem~\ref{thm:soup}, 
which consists of nodes that are able to obtain (almost uniform) node id 
samples from the same set, despite churn.
A well known solution is to make use of the birthday paradox:  If node $u$ 
is able to select $\Theta(\sqrt{n}\log n)$ sample ids and assign these so 
called \emph{data nodes} to store $\I$, then $\I$ can be retrieved within 
$\sqrt{n}$ rounds by most nodes, with high probability.
In our dynamic setting,  up to $O(n/\log^k n)$ nodes per round can be affected 
by churn, which means that the number of data nodes might decrease rapidly.
Care must be taken when replenishing the number of data nodes, as we need 
to ensure that the data nodes are chosen randomly and their total number does not exceed $\tilde{O}(\sqrt{n})$.
A simple algorithm for estimating the actual number of data nodes is to 
require data nodes first to generate a random value from the exponential 
distribution with rate $1$, then to aggregate the minimum generated value 
$z$ by flooding it through the network (cf.\ \cite{APRU12}), and finally 
to compute the estimate as $1/z$.
The simplicity of the above approach comes at the price of requiring every 
node to participate (by flooding) in the storage of the item.

We now describe an approach that avoids the above pitfalls and
provides fast data retrieval and persistence with high probability, while
limiting the actual number of nodes needed for storing a data item to
$\Theta(\log n)$, while a large set of $\Omega(\sqrt{n})$ nodes serve as 
so-called ``landmarks''. 
That is, a node $v$ is a \emph{landmark for item $\I$ in
$r$}, if $v$ knows the id of some node $w \in V^r$ that stores $\I$.
Note that even if $v$ was a landmark in $r$, it might no longer be a
landmark in round $r+1$ if $w$ has been churned out at the beginning of
$r$; moreover, $v$ itself will not be aware of this change until it
attempts to contact $w$.
To facilitate the maintenance of a large set of randomly distributed
landmarks, our algorithms construct a committee of $\Theta(\log n)$
nodes via the overlay network.
In the context of the storage procedure, the committee is responsible for
storing some data item $\I$ and creating sufficiently many (i.e.\
$\Omega(\sqrt{n})$) randomly distributed \emph{storage landmarks} for
allowing fast retrieval of $\I$ by other nodes.
If, on the other hand, $u$ wants to retrieve item $\I$, having a large
number of \emph{search landmarks} will significantly increase the probability of
finding a sample of a storage landmark in short time.
Due to churn, the number of landmark nodes (and the number of committee
members) might be decreasing rapidly.
Thus the committee members continuously need to replenish the committee
and rebuild the landmark set.
Note that we guarantee that the number of landmarks involved with a
storage or search request remains in $\tilde{O}(n^{1/2+\delta})$, for any constant
$\delta>0$, which ensures that our algorithms are scalable to large networks.

\subsection{Building Block: Electing and Maintaining a Committee} 
\label{sec:committee}
We will now study how a node $u$ can elect and maintain a committee of 
nodes in the network. Such a committee can be entrusted with some task 
that might need to be performed persistently in the network even after $u$ 
is churned out.  We for instance use such a committee in Section\ \ref{subsec:sr}  to enable $u$ to store a 
data item $\I$ so that some other node that needs the data may be able to 
access it well into the future without relying on $u$'s presence in the 
network.
While electing a committee is easy, we need to be careful to maintain the committee for a longer (polynomial in $n$) period of time because, without maintenance, 
the members can be churned out in $O(\log^k n)$ rounds.

\onlyLong{
In Section\ \ref{sec:soup} we analyzed the setting where each node in the dynamic network initiates $\Theta(\log n)$ random walks in round 1. 
We focused on a time span of
$\Theta(\log n)$ rounds to study the mixing characteristics of random walks culminating in Theorem\ \ref{thm:storage}. 
Informally speaking, we showed that a large number of nodes in round $\tau =  \tau(\bG, 1/2n)$ received good samples of nodes currently in the network. This will help us create a committee of randomly chosen nodes, but churn can decimate this committee in $O(\log^k n)$ rounds.  
Our goal now, however, is to maintain a committee for a longer period of time --- time that is polynomial in $n$.} 
Towards this goal, we make each node initiate $\alpha \log n$ random walks every round. Depending on how long we want the committee to last, we can fix an appropriately large $\alpha$. Each random walk travels for $2\tau$ rounds; the node at which the random walk stops  is called its destination. The destination node can use the source  of the random walk as a sample from the set of nodes in the network. Since every node initiates some $\alpha \log n$ random walks every round, Theorem\ \ref{thm:soup} is applicable in every round $r \ge 2\tau$.
To formalize this application of Theorem\ \ref{thm:soup}, we parameterize $\core$ with respect to time.  
We define $\core^r$ to be the largest   subset of  $V^{r - 2\tau} \cap 
V^{r}$  such that for any $s \in \core^r$ and $d \in \core^r$, a random 
walk that starts from $s$ (in round $r - 2\tau$) terminates in $d$ (in 
round $r$) with probability in $[1/17n, 3/2n]$. From Theorem\ 
\ref{thm:soup}, we know that $\core^r$ has cardinality at least $n - 
O({n}/{\log^{(k-1)/2} n})$. When the value of 
$r$ is clear from the context, we may avoid the explicit superscript.

Algorithm~\ref{algo:committee} presents an algorithm that
\begin{enumerate}[noitemsep,topsep=0pt,parsep=0pt,partopsep=0pt]
\item enables a node $u \in \core^r$, $r \ge 2\tau$,  to elect a committee of $\Theta(\log n)$ 
nodes and
\item enables the  committee to maintain itself at  a cardinality of 
  $\Theta(\log n)$ nodes despite $O(n/\log^k n)$ churn. Moreover, the 
  committee must comprise of at least $\Theta(\log n)$ nodes from the current 
  $\core$.
\end{enumerate}
In Algorithm~\ref{algo:committee}, we assume that $u$ is in the $\Core$ when it needs to 
create the committee. We  show \onlyLong{in Lemma\ \ref{lem:sufficient}} that, 
if $u \in \Core$, then, $u$ will receive a sufficient number of random 
samples, so it chooses some $h \log n$ samples to form the committee. 
To ensure that churn does not decimate the committee, every $\Theta(\log n)$ 
rounds, we re-form the committee, i.e., the current committee members choose 
a suitable leader \onlyLong{(denoted $c^r$ in Algorithm\ \ref{algo:committee})}
that chooses a new set of committee members. The old committee members hand 
over their task to the new committee members and ``resign" from the committee 
and the new members join the committee and resume the task they are called to perform.

\begin{algorithm}[h!]
  \begin{algorithmic}[1]
\item[]

 ~
 
\noindent {\bf \underline{Committee Creation}.}

~

\item[] \Comment{Let $r_1  \ge  2\tau$ be the round when $u$ must create $\Com$. We assume that $u \in \Core^{r_1}$. Let $h \le \alpha/36$ be a fixed constant.}
 \item[{\bf At  round $r_1$: }] Node $u$ chooses $h\log n$ sample ids
 and requests each 
    node to join the committee $\Comm$. Therefore,
    $
    \Comm \leftarrow \set{v | (v \in V^{r_1}) \wedge \text{$v$ received an invitation from $u$}}.
    $
  Along with the request, $u$ sends all the ids
    in $\Comm$ to every node in $\Comm$.
    This enables  the nodes in $\Comm$ to form a clique interconnection.

~

\noindent {\bf \underline{Committee Maintenance}.}

  \item[]
     \item[] \Comment{For every round $r$ that is  $2\gamma \tau$ rounds after $\Com$ is created for every  positive integer $\gamma$.} 
     \item[\bf At round $r$:]The nodes in $\Comm$ record the random walks they receive along with the source of each random walk.
          \item[{\bf At round $r+1$: }]The nodes in $\Comm$ exchange the number of random walks they received in round $r$ with each other. 
     \item[{\bf At the end of round $r+1$: }] The number of random walks received by each node in $\Comm$ is common knowledge among the members of $\Comm$.  The node $c^r$ with the largest number of random walks is chosen to initiate the new committee (breaking ties arbitrarily yet unanimously). The choice of $c^r$ is now common knowledge among the nodes in $\Comm$. 
     \item[{\bf At round $r+2$: }] The node $c^r$ chooses $h\log n$ random walks that stopped at $c^r$ in round $r$ and invites{\footnotemark} their source nodes  to form the new committee in round $r+3$. Let $\Comm^*$ be the set of invited source nodes. Along with the invitation, the $h \log n$ id's of all members of $\Comm^{*}$ are included. Therefore, the id's of nodes in $\Comm^{*}$ becomes common knowledge among the nodes in $\Comm^{*}$.  
     The nodes in $\Comm$ cease to be members of the committee at the end of round $r+2$. (If the situation calls for it, we may postpone the ``resignation" of the current committee members; the overlap in membership can be used for ensuring smooth transition of the task performed by the committee.)
     \item[{\bf At round $r+3$: }] The members in $\Comm^{*}$ formally take over the committee. I.e. $\Comm \leftarrow \Comm^{*}.$
      
      Each member of the new $\Comm$ uses the id's of all other members to form a clique interconnection. 

  \end{algorithmic}
  \caption{Committee Maintenance and Construction for node $u$.}
  \label{algo:committee}
\end{algorithm}

\footnotetext{In our algorithm description, we assume for simplicity that $c^r$ is not
churned out in round $r+2$.  We can handle the case where $c^r$ \emph{is}
churned out, by having the set $S$ of the $\Theta(\log n)$ committee
members that have received the largest number of random walks all perform
the task of $c^r$ in parallel, i.e., each of them builds a new committee.
Once these committee constructions are complete, the (survived) nodes in
$S$ agree on a single member $c^*$ of $S$ and its committee $\Com^*$, and
all other committees are dissolved.}

\onlyLong{

We begin our analysis of Algorithm\ \ref{algo:committee} with a lemma that limits the number of random walks that any node receives in round $r$ from nodes that are not currently in the $\core$.

\begin{lemma}\label{lem:bad_rand_walks}
Consider a churn rate of $4n/ \log^k n$.
For any $r \ge 2 \tau$ and any $u \in V^r$, let $B(u,r)$ be the number of random walks that  started in round $r - 2 \tau$ from some node in $V^{r - 2 \tau} \setminus \Core^r$  and stopped in $u$ in round $r$. Then, $E[B(u,r)] \le 6 \alpha \log^{-\frac{3-k}{2}} n$ and %
$\Prob{B(u,r) \ge 12 \alpha \log^{\frac{5 - k}{4}} n} 
\le 1/n^{2 \alpha}.$
\end{lemma}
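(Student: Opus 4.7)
The plan is to combine two ingredients that have already been set up earlier in the paper: (i) the Soup Theorem's bound on the complement of $\Core^r$, and (ii) the uniform upper bound on per-walk termination probabilities coming from the coupling between $\G$ and the random-walk-preserving network $\bG$ in Lemma~\ref{lem:mt}(B).

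First I would control the number of ``bad sources'': by Theorem~\ref{thm:soup} applied at time $r$, $|\Core^r|\ge n - 8n/\log^{(k-1)/2}n$, and the churn in $2\tau$ rounds contributes only a further $2\tau\cdot 4n/\log^{k}n = O(n/\log^{k-1}n)$ nodes to $V^{r-2\tau}\setminus V^r$. Both terms are absorbed into $O(n/\log^{(k-1)/2}n)$, giving $|V^{r-2\tau}\setminus\Core^r|\le O(n/\log^{(k-1)/2}n)$. Each such node starts exactly $\alpha\log n$ independent walks in round $r-2\tau$, for a total of $O(\alpha n\log^{(3-k)/2}n)$ ``bad walks.'' Since $\pi_u(\G,s,2\tau)\le\pi_u(\bG,s,2\tau)\le 3/(2n)$ holds for \emph{every} source $s$ (the uniform upper bound in $\bG$ does not require $s\in\Core$, and killings in $\G$ can only decrease the probability), linearity of expectation yields the desired expectation bound on $E[B(u,r)]$ of order $\alpha\log^{(3-k)/2}n$ (the polylog factor in $n$ stated in the lemma).

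For the high-probability bound, I would observe that the indicators ``walk $w$ started from bad source and ended at $u$'' are mutually independent once the adversary's graph sequence is fixed (the adversary is oblivious), so $B(u,r)$ is a sum of independent Bernoullis and a Chernoff tail bound applies. Since $\mu=E[B(u,r)]$ is only polylogarithmic while the target threshold $t$ is a strictly larger polylog, I am in the large-deviation regime $t\gg e\mu$; the bound $\Pr[B(u,r)\ge t]\le (e\mu/t)^{t}$ (or equivalently $\exp(-t\ln(t/(e\mu)))$) gives the failure probability. Plugging in the claimed threshold, the ratio $t/\mu$ is a polynomial of $\log n$, and the resulting exponent is $\Omega(\alpha\log n)$, yielding the $1/n^{2\alpha}$ bound after choosing the multiplicative slack and $\alpha$ appropriately.

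The main obstacle is the concentration step rather than the mean-bound step: because $E[B(u,r)]$ is itself small, one cannot rely on the ``near-the-mean'' form of Chernoff and must work in the Poisson regime, being careful to track the dependence of the exponent $t\ln(t/(e\mu))$ on $k$ so that it dominates $2\alpha\ln n$ uniformly in the range $k=1+\delta$. A secondary technical point is that the Soup Theorem itself only holds with high probability; I would condition on that high-probability event when applying the above count, absorbing its failure probability into the final bound by a union bound.
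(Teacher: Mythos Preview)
Your expectation argument is essentially the paper's: bound $|V^{r-2\tau}\setminus\Core^r|$ by $O(n/\log^{(k-1)/2}n)$, multiply by the $\alpha\log n$ walks per node and by the uniform upper bound $3/(2n)$ from Lemma~\ref{lem:mt}(B). (Your extra accounting for churned nodes is unnecessary, since $\Core^r\subseteq V^{r-2\tau}$ by definition.) The expectation comes out to $\mu=\Theta(\alpha\log^{(3-k)/2}n)$, which for $k=1+\delta$ is $\Theta(\alpha(\log n)^{1-\delta/2})$---close to $\alpha\log n$, \emph{not} small.

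The concentration step, however, does not go through as you describe. You claim to be in the ``Poisson regime $t\gg e\mu$'' and invoke $(e\mu/t)^t$, but in fact
\[
\frac{t}{\mu}=\frac{12\alpha\log^{(5-k)/4}n}{6\alpha\log^{(3-k)/2}n}=2\log^{(k-1)/4}n,
\]
which for $k=1+\delta$ is only $2(\log n)^{\delta/4}$---barely superconstant. Hence $\ln(t/(e\mu))=\Theta(\ln\ln n)$, and
\[
t\ln\!\Big(\frac{t}{e\mu}\Big)=\Theta\big(\alpha\log^{(5-k)/4}n\cdot\ln\ln n\big)=o(\alpha\log n),
\]
since $(5-k)/4<1$. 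So your exponent is $o(\log n)$, not $\Omega(\alpha\log n)$, and the bound $(e\mu/t)^t$ does not deliver $n^{-2\alpha}$.

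The paper does exactly what you rule out: it applies the near-the-mean Chernoff form $\Pr[B\ge(1+\delta)\mu]\le\exp(-\mu\delta^2/3)$ with $\delta=\log^{(k-1)/4}n$. The point is that $\mu$ is \emph{large} (of order $\log^{(3-k)/2}n$), so squaring the small relative deviation $\delta$ pays off:
\[
\frac{\mu\delta^2}{3}=\frac{6\alpha\log^{(3-k)/2}n\cdot\log^{(k-1)/2}n}{3}=2\alpha\log n,
\]
which is precisely the exponent needed. In other words, the threshold $t$ was engineered so that $\delta=t/\mu-1$ satisfies $\mu\delta^2=\Theta(\log n)$; the $(e\mu/t)^t$ form throws away the extra factor of $\delta$ and cannot recover this.
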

\onlyLong{
\begin{proof}
Recall that we analyzed random walks in $\G$ using $\bG$. Consider node $\bar{u}
\in \bG$ that corresponds to $u \in \G$. Let $v$ be some node in  $V^{r-2 \tau}
\setminus \Core^r$ and let $\bar{v}$ be the corresponding node in $\bG$. From
Lemma~\ref{lem:mt}.(B), we know that a random walk that started in $\bar{v}$ will reach $\bar{u}$ with probability at most $3/2n$. Therefore,
\begin{align*}
E[B(u,r)] &\le E[\text{number of random walks from $V^{r - 2\tau} \setminus \Core^r$ that reach $u$}]\\
&\le \left (\frac{4n}{\log^{\frac{k-1}{2}} n} \right ) \frac{3}{2n} \alpha \log n
= \frac{6 \alpha \log n}{\log^{\frac{k-1}{2}}n} = 6 \alpha \log^{-\frac{3-k}{2}} n.\\
\end{align*}

Now using Chernoff bounds,
\begin{align*}
\Prob{B(u,r) \ge 12 \alpha \log^{\frac{5 - k}{4}} n} & \le \Prob{B \ge (1 + \log^{\frac{k-1}{4}} n) \frac{6 \alpha}{\log^{\frac{k-3}{2}} n}}\\
			&\le \exp(-\frac{6 \alpha}{\log^{\frac{k-3}{2}} n} \cdot \frac{\log^{\frac{k-1}{2}} n}{3}) = \exp(-2 \alpha \log n) = \frac{1}{n^{2 \alpha}}.
\end{align*}
\end{proof}
}

While Lemma\ \ref{lem:bad_rand_walks} limits the number of random walks 
that a node receives from nodes not in the $\Core$, we also need to ensure 
that a node in the $\Core$ gets a sufficient number of random walks from 
other nodes in the $\core$. Thus, when we choose $c^r$, we choose a node 
that received a large number of samples. From 
Lemma~\ref{lem:bad_rand_walks}, we know that only a small number of those 
samples can be from nodes not in the $\core$, so this ensures that the 
committee that we choose will be largely from the $\core$.

\begin{lemma}\label{lem:sufficient}
Consider any $u \in \Core^r$, $r \ge 2\tau$. With probability at least $1 - 1/n^{\ell_1}$, where $\ell_1 \le \alpha / 144$, $u$ receives at least $\frac{\alpha \log n}{36}$ random walks.
\end{lemma}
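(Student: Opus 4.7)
The plan is to lower bound the expected number of random walks arriving at $u$ in round $r$ from sources that lie in $\Core^r$, and then apply a Chernoff bound to obtain concentration. Throughout, I rely on the fact that, for $u \in \Core^r$, Theorem~\ref{thm:soup} guarantees that every random walk initiated in round $r - 2\tau$ from any $s \in \Core^r$ terminates at $u$ in round $r$ with probability at least $1/17n$.

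First I would identify the pool of ``good'' sources. By Theorem~\ref{thm:soup} applied to the window $[r-2\tau, r]$, we have $|\Core^r| \geq n - 8n/\log^{(k-1)/2} n$, and each such source $s$ is a node present in round $r - 2\tau$ that initiated $\alpha \log n$ random walks in that round. Let $X$ denote the number of these walks (across all $s \in \Core^r$) that terminate at $u$ in round $r$. Each of the $|\Core^r| \cdot \alpha \log n$ walks contributes an independent Bernoulli indicator (conditional on the adversarially fixed graph sequence $\G$, which is oblivious to the random choices of the walks) with success probability at least $1/17n$. Hence
\[
\Exp{X} \geq \left(n - \frac{8n}{\log^{(k-1)/2} n}\right) \cdot \alpha \log n \cdot \frac{1}{17n} \geq \frac{\alpha \log n}{18},
\]
for $n$ sufficiently large.

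Next I would apply the standard lower-tail Chernoff bound to $X$, which is a sum of independent $\{0,1\}$-valued random variables. Setting $\delta = 1/2$, and noting that the target threshold $\alpha \log n / 36$ is at most $(1 - \delta)\,\Exp{X}$, we get
\[
\Prob{X \leq \frac{\alpha \log n}{36}} \leq \Prob{X \leq (1-\tfrac{1}{2})\Exp{X}} \leq \exp\!\left(-\frac{\Exp{X}}{8}\right) \leq \exp\!\left(-\frac{\alpha \log n}{144}\right) = n^{-\alpha/144}.
\]
Since $u$ receives at least $X$ random walks in total, this establishes the lemma with $\ell_1 \leq \alpha/144$.

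The only subtle point is that the Bernoulli indicators must be independent: this holds because the adversary is oblivious and therefore commits to $\G$ before any random choices are made, so the trajectories of the $|\Core^r| \cdot \alpha \log n$ random walks (coming from distinct source nodes, or distinct tokens at the same source) are mutually independent given $\G$. A minor bookkeeping issue is that Theorem~\ref{thm:soup} is an event that itself holds only with high probability; I would condition on that event (which adds only a $1/\mathrm{poly}(n)$ term) and absorb it into the final failure probability.
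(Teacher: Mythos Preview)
Your proof is correct and follows essentially the same approach as the paper: lower bound $\Exp{X}$ by $\alpha\log n/18$ via Theorem~\ref{thm:soup} and apply a lower-tail Chernoff bound with $\delta=1/2$ to get the $n^{-\alpha/144}$ failure probability. You actually supply more detail than the paper does (the explicit computation of $\Exp{X}$, the independence justification, and the conditioning on the high-probability event of Theorem~\ref{thm:soup}), but the argument is the same.
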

\onlyLong{
\begin{proof}
Let $X$ be the number of random walks received by $u$ in round $r$. From Theorem\ \ref{thm:soup}, we know that $E[X] \ge \frac{\alpha \log n}{18}$. Using Chernoff bound, we get 
\begin{align*}
\Prob{X \le \frac{\alpha \log n}{36}}  &= \Prob{X \le (1 - 1/2) \frac{\alpha \log n}{18}} \le \exp(-\frac{\alpha \log n}{144} ) \le 1/n^{\ell_1}.
\end{align*} 
\end{proof}
}
\begin{corollary}\label{cor:sufficient}
In Algorithm\ \ref{algo:committee}, let $c^r$ be a node chosen in some round $r+1$ to invite a new set of nodes to form the committee. With probability at least $1 - 1/n^{\ell_1}$ it received more than $h \log n$ random walks in the previous round since $h \le \alpha/36$. Out of the $h \log n$ random walks, with probability at least $1 - 1/n^{2h}$, at least $ h \log n - 12 h \log^{\frac{5 - k}{4}} n$ random walks originated in $\Core^r$. 
\end{corollary}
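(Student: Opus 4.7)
The plan is to derive the corollary directly from Lemma~\ref{lem:sufficient} and Lemma~\ref{lem:bad_rand_walks}, both applied to the leader $c^r$ chosen by the committee-refresh step.

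For the first claim, I would argue that $c^r$ is forced to receive many random walks. By construction, $c^r$ maximizes the walk count over the current committee $\Comm$, so it suffices to exhibit at least one member of $\Comm$ that lies in $\Core^r$. This will be maintained as an invariant of the refresh procedure: inductively, the previous iteration produced a committee that intersects $\Core^r$. Applying Lemma~\ref{lem:sufficient} to any such core-member then supplies, with probability at least $1 - 1/n^{\ell_1}$, a walk count of at least $\alpha \log n / 36 \ge h \log n$ (using the assumption $h \le \alpha/36$). Since $c^r$ attains the committee-wide maximum, it inherits the same lower bound of $h \log n$ walks.

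For the second claim, fix $c^r$ and invoke Lemma~\ref{lem:bad_rand_walks} with $u = c^r$. With probability at least $1 - 1/n^{2\alpha} \ge 1 - 1/n^{2h}$ (using again $h \le \alpha/36$), the total number $B(c^r,r)$ of walks that arrive at $c^r$ and originate in $V^{r-2\tau} \setminus \Core^r$ is at most $12\alpha \log^{(5-k)/4} n$. Since $c^r$ selects only $h \log n$ of the $\Theta(\alpha \log n)$ walks it received, I would argue that the expected number of bad walks falling into its uniform sub-sample is $\tfrac{h}{\alpha} \cdot 12\alpha \log^{(5-k)/4} n = 12 h \log^{(5-k)/4} n$, and a standard hypergeometric/Chernoff tail bound concentrates this count around its expectation, giving the stated $h\log n - 12 h \log^{(5-k)/4} n$ lower bound on core-originating walks.

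The main obstacle is justifying the inductive invariant that $\Comm \cap \Core^r \ne \emptyset$. The set $\Core^r$ changes from round to round, and between successive refresh events (separated by $2\gamma\tau$ rounds) churn can evict up to $O(n/\log^k n)$ committee candidates; moreover, a node that belonged to $\Core$ at the time the committee was formed need not still lie in the current $\Core^r$. Tuning $\gamma$ and the committee size so that a large constant fraction of committee members survive \emph{and} remain in the current $\Core^r$ is the real work; I would defer this structural invariant to the broader committee-maintenance proof and treat the corollary itself as its clean inner-loop consequence via the Soup Theorem.
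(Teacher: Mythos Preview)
Your argument for the first claim is essentially what the paper intends: Lemma~\ref{lem:sufficient} gives $\ge \alpha\log n/36 \ge h\log n$ walks to any committee member lying in $\Core^r$, and the maximality of $c^r$ transfers this lower bound. You correctly flag the invariant $\Comm \cap \Core^r \ne \emptyset$ as the structural input that is established in the surrounding committee-maintenance analysis (Theorem~\ref{lem:committee} together with Lemma~\ref{lem:persist}) rather than inside the corollary itself.

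For the second claim, however, your sub-sampling step is both unnecessary and unsound as written. Lemma~\ref{lem:bad_rand_walks} applies to \emph{any} $u\in V^r$ and already bounds the \emph{total} number of non-$\Core$ walks arriving at $c^r$ by $12\alpha\log^{(5-k)/4}n$ with probability $\ge 1-1/n^{2\alpha}\ge 1-1/n^{2h}$; since this quantity is $o(\log n)$, it immediately caps the bad walks inside \emph{any} $h\log n$-subset that $c^r$ selects, with no need to assume the selection is uniformly random (Algorithm~\ref{algo:committee} never says it is). Your hypergeometric reduction from $\alpha$ to $h$ requires the denominator---the total walk count at $c^r$---to be close to $\alpha\log n$, but the only guaranteed lower bound is $\alpha\log n/36$; when $h=\alpha/36$ the ``sub-sample'' may coincide with the full sample and no reduction occurs. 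The discrepancy between the corollary's constant $12h$ and the lemma's $12\alpha$ is most plausibly a slip in the paper: both are $o(\log n)$, which is all that the downstream use in Theorem~\ref{lem:committee} actually needs.
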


The $h \log n$ nodes invited by $c^r$ in round $r+2$ were certainly in the network in round $r- 2\tau$. From Corollary\ \ref{cor:sufficient}, we also know that a large number of those nodes are in $\Core^r$. We also need to ensure that most of them survive for another $2\tau$ rounds until round $r + 2\t$. In particular, we want to ensure that those nodes that survive are largely in $\Core^{r+2\t}$. 

\begin{lemma}\label{lem:persist}
Suppose $I$ is the set of recipients of $h \log n$ invitations  sent by some $c^r$ in round $r+2$ (cf. Algorithm\ \ref{algo:committee}). Then with probability at most $2/n^{2h}$, $|I \setminus \Core^{r+2 \tau}| \in \omega( \log^{\frac{5-k}{4}} n)$. In other words, $|I \cap \Core^{r+2\t}| \in h\log n - o(\log n)$ whp.
\end{lemma}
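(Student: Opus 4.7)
The plan is to reduce to the same expectation-plus-Chernoff argument used in the proof of Lemma~\ref{lem:bad_rand_walks}, but applied to a different ``bad'' source set tailored to the current quantity of interest. Every invited node in $I$ is the source of a random walk launched in round $r-2\tau$, so $I \subseteq V^{r-2\tau}$; an invited node thus fails to lie in $\Core^{r+2\tau}$ exactly when its source belongs to the set
\[
B := V^{r-2\tau}\setminus \Core^{r+2\tau}.
\]
The entire argument boils down to bounding $|B|$ and then bounding the number of random walks arriving at $c^r$ with source in $B$.

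First I would bound $|B|$. Since $\Core^{r+2\tau}\subseteq V^{r}\cap V^{r+2\tau}$, the set $B$ is contained in the union of (i) $V^{r-2\tau}\setminus V^{r+2\tau}$, nodes churned out during the $4\tau$ rounds between $r-2\tau$ and $r+2\tau$, and (ii) $V^{r+2\tau}\setminus \Core^{r+2\tau}$, the $\Core$-deficit at time $r+2\tau$. The per-round churn bound of $4n/\log^k n$ caps~(i) at $O(n/\log^{k-1} n)$, while Theorem~\ref{thm:soup} caps (ii) at $8n/\log^{(k-1)/2} n$. Since $(k-1)/2 < k-1$ for $k>1$, term~(ii) dominates and $|B| = O(n/\log^{(k-1)/2} n)$, matching the order of the bad set handled in Lemma~\ref{lem:bad_rand_walks}.

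Next I would let $X$ count the random walks arriving at $c^r$ in round $r$ whose source lies in $B$. Each of the $\alpha\log n$ walks launched by any node of $B$ reaches $c^r$ with probability at most $3/(2n)$ by Lemma~\ref{lem:mt}(B), yielding $E[X] \le |B|\cdot \alpha\log n \cdot 3/(2n) = O(\log^{(3-k)/2} n)$. Because the oblivious adversary fixes the dynamic graph in advance, the per-walk indicators are mutually independent, and the very same Chernoff calculation as in Lemma~\ref{lem:bad_rand_walks} then yields $\Prob{X \ge 12h\log^{(5-k)/4} n} \le 1/n^{2h}$.

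Finally, the number of invited sources in $B$ is at most $X$, since every distinct invited source in $B$ accounts for at least one chosen walk from $B$; so on the good event we obtain $|I\setminus \Core^{r+2\tau}| \le 12h\log^{(5-k)/4} n = o(\log n)$. Union-bounding this with Corollary~\ref{cor:sufficient} --- needed to guarantee that $c^r$ actually had $h\log n$ walks available to pick from, with its failure probability forced below $1/n^{2h}$ by choosing $\alpha$ large enough that $\ell_1 \ge 2h$ --- produces the stated $2/n^{2h}$ failure probability. The main obstacle is the size analysis of $B$: one must carefully combine the churn contribution over the full $4\tau$-round window with the $\Core$-deficit term and verify that the latter dominates, which is precisely where the hypothesis $k>1$ is used.
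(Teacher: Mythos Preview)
Your approach is correct and uses a genuinely different decomposition from the paper's. The paper first invokes Corollary~\ref{cor:sufficient} to bound the number of chosen walks whose source lies outside $\Core^r$ (this contributes one $1/n^{2h}$ term), and then separately runs a Chernoff argument on the churn during $[r,r+2\tau]$ to bound how many invited nodes get removed in that window (the second $1/n^{2h}$ term). You instead go straight to the target set by taking $B=V^{r-2\tau}\setminus\Core^{r+2\tau}$ and bounding all bad walks in a single Chernoff step. Your route is cleaner, and it explicitly handles a case the paper's split glosses over, namely sources that lie in $\Core^r$, survive to round $r+2\tau$, yet fail to belong to $\Core^{r+2\tau}$; your term~(ii), the $\Core$-deficit $|V^{r+2\tau}\setminus\Core^{r+2\tau}|$, absorbs exactly these.

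Two small points. First, your bad set $B$ is a constant factor larger than the one in Lemma~\ref{lem:bad_rand_walks} (the churn term~(i) over a $4\tau$-round window is additive, and the $\Core$-deficit constant is $8$ rather than $4$), so you cannot literally import that lemma's constants: the pair $(12h,\,1/n^{2h})$ must be re-derived rather than quoted. This is immaterial for the $\omega(\log^{(5-k)/4}n)$ statement, but it should be said explicitly. Second, your closing union bound is not needed: the lemma's hypothesis already posits that $h\log n$ invitations were issued, so you need not invoke the ``$c^r$ received enough walks'' clause of Corollary~\ref{cor:sufficient}, and you need not impose $\ell_1\ge 2h$ (which is \emph{not} implied by the paper's standing constraints $h\le\alpha/36$ and $\ell_1\le\alpha/144$). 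Your single Chernoff already yields a failure probability below $2/n^{2h}$; the paper's factor of $2$ is simply an artifact of its two-part split.
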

\onlyLong{
\begin{proof}
From Corollary\ \ref{cor:sufficient}, we know that with probability at most $1/n^{2h}$, at most $12 h \log^{\frac{5 - k}{4}} n$ random walks did not originate in the $\Core^r$. Obviously, these $12 h \log^{\frac{5 - k}{4}} n$ random walks did not originate from nodes in $\Core^{r + 2\t}$ either.   In this lemma, we are upper bounding the cardinality of $I \setminus \Core^{r +  2\tau}$. Therefore, in addition to the $12 h \log^{\frac{5 - k}{4}} n$ samples that we have already accounted as lost, we must bound the number of samples that we lose between round $r$ to $r+ 2\tau$ due to churn. During this time period, a total of $\frac{8\t n}{\log^{k} n} = \frac{8\m n}{\log^{k-1} n}$ nodes are churned out, since $\tau = \tau(\bG, 1/2n) = \m \log n$. Each node that is churned out is an opportunity for the adversary to churn out a node in $I$. Let $X$ be the random variable that denotes the number of nodes in $I$ that were churned out between rounds $r$ to $r + 2\tau$.  Consider a node $i$ that is churned out. If $i \in I$, then the random walk from $i$ reached $c^r$, an event that can happen with probability at most $3/2n$. Therefore, whenever the adversary churns out a node, it succeeds in churning out a node in $I$ with probability at most $3/2n$. Therefore, 
$
E[X] \le (3/2n) (4\m n/\log^{k-1} n)= 6\m/\log^{k-1} n.
$ 

\begin{align*}
\Prob{X \ge 12 \sqrt{h} \log^{(2-k)/2}n} &= 
\Prob{X \ge (\frac{2\sqrt{h}}{\sqrt{\m}} \log^{\frac{k}{2}} n) \frac{6\m}{\log^{k-1} n}} \\
&\le \Prob{X \ge (1 + \frac{\sqrt{h}}{\sqrt{\m}} \log^{\frac{k}{2}} n) \frac{6\m}{\log^{k-1} n}}  &\quad\quad \text{(for sufficiently large $n$)}\\
& \le \exp(-\frac{6h \log^{k} n}{ 3\log^{k-1} n}) &\quad\quad  \text{(using Chernoff bound)}\\
&= \exp(-2h \log n) = 1/n^{2h}.
\end{align*}
Taking the union bound over the probability with which we lose  $X$ random walks plus the probability with which we lose the at most $12 h \log^{\frac{5 - k}{4}} n$ random walks that did not originate in the $\Core^r$, the result follows.
\end{proof}
}

Recall that we have assumed that  the node $u$ that seeks to create the committee in round $r_1$, $r_1 \ge 2\tau$,  is in $\Core^{r_1}$.} 
Let $\Comm^r$, $r \ge r_1$, denote the set of nodes that consider themselves to be committee members in round $r$. 
We say that $\Com^r$ is {\em good} if $|\Com^r \cap \core^r| \ge (1- \varepsilon) h \log n$ for any fixed $\varepsilon > 0$. 
In the following theorem states that the committee that is created by $u$ will be good  for a suitably long period of time.
\begin{theorem} \label{lem:committee}
Fix $\varepsilon$  to be a small positive number in $(0,1]$. Recall that $u$ creates the committee in round $r_1$. Let $R\ge r_1$ be a random variable denoting the smallest value of $r$ when $\Com^r$ is not good and let $Y$ be a geometrically distributed random variable with parameter $p = (1 /n^{\ell_1} + 2/n^{2h} ) \in n^{-\Omega(1)}$.  Then, $Y$ is smaller than $R - r_1 + 1$ in the usual stochastic order~\cite{Shaked2007}. In other words, for every positive integer $i$, $\Prob{Y \ge i} \le \Prob{R - r_1 + 1 \ge i}$.
  \end{theorem}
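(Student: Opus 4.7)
The plan is to partition the committee's lifetime into refresh epochs of $2\tau$ rounds each, to bound the failure probability of a single epoch by $p$, and then to couple the index of the first failed epoch with the geometric variable $Y$. Set $t_\gamma := r_1 + 2\gamma\tau$ for $\gamma \geq 0$ and call the rounds $[t_\gamma, t_{\gamma+1})$ the $\gamma$-th epoch. Algorithm~\ref{algo:committee} performs exactly one refresh per epoch, treating the initial construction at $r_1$ as the refresh of epoch $0$.

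First I would define the failure event $F_\gamma$: the refresh in epoch $\gamma$ fails if either (i) the chosen leader $c^{t_\gamma}$ receives fewer than $h \log n$ random walks, or (ii) the invitee set $I$ satisfies $|I \cap \Core^{t_{\gamma+1}}| < (1-\varepsilon) h \log n$. By Corollary~\ref{cor:sufficient}, event (i) has probability at most $1/n^{\ell_1}$, and by Lemma~\ref{lem:persist}, event (ii) has probability at most $2/n^{2h}$, so a union bound gives $\Prob{F_\gamma} \leq p$. Because the adversary is oblivious (the graph sequence $\G$ is fixed in advance of any randomness) and each epoch uses a fresh batch of random walks initiated during round $t_\gamma$ and the preceding $2\tau-1$ rounds, the same bound survives conditioning: $\Prob{F_\gamma = 1 \mid F_0 = \cdots = F_{\gamma-1} = 0} \leq p$.

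Next I would set $N := \min\{\gamma \geq 0 : F_\gamma = 1\}$. Iterating the conditional bound yields $\Prob{N \geq i} \geq (1-p)^i$, and since a geometric $Y$ of parameter $p$ satisfies $\Prob{Y \geq i+1} = (1-p)^i$, this gives $N + 1 \geq_{\mathrm{st}} Y$. Under the event $\{N \geq i\}$, every refresh through epoch $i-1$ succeeds, so the committee is good for all $r < t_i = r_1 + 2\tau i$; consequently $R \geq t_N$ and hence $R - r_1 + 1 \geq 2\tau N + 1 \geq N + 1 \geq_{\mathrm{st}} Y$, which is the desired stochastic dominance.

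The main obstacle is in the last step: showing that goodness persists throughout each epoch, not only at its boundary round $t_\gamma$. Within an epoch both $\Core^r$ and $\Com^r$ drift, the former because its definition depends on the starting round of the underlying random walks, the latter because of churn. I would bound both effects by $o(\log n)$: an argument analogous to that in Lemma~\ref{lem:persist} shows that only $O(h \log n \cdot \m / \log^{k-1} n) = o(\log n)$ committee members are lost to churn during a single epoch with probability $1 - o(1/n^{2h})$, while the argument of Theorem~\ref{thm:soup} applied to shifted starting rounds gives $|\Core^r \triangle \Core^{t_\gamma}| = o(n/\log^{(k-1)/2} n)$ for $r \in [t_\gamma, t_{\gamma+1})$. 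Since the slack $\varepsilon h \log n$ in the definition of ``good'' dominates these $o(\log n)$ perturbations, goodness survives the entire epoch, and the small additional failure probability is absorbed into $p$ without changing its $n^{-\Omega(1)}$ order, completing the coupling.
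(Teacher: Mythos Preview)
Your approach is essentially the same as the paper's: bound the per-epoch failure probability by $p$ via the same three ingredients (Corollary~\ref{cor:sufficient} for the leader's sample count and for the fraction of invitees outside $\Core^r$, Lemma~\ref{lem:persist} for invitees lost before the next refresh), then chain the epochs to obtain geometric-type tail bounds. Your write-up is in fact more careful than the paper's in two respects: you make the stochastic-dominance coupling explicit (the paper simply asserts ``thus the theorem follows'' after bounding a single epoch), and you flag the within-epoch drift of $\Core^r$ as a gap to be closed, whereas the paper's long proof silently jumps from ``few invitees outside $\Core^r$'' and ``few churned out by $r+2\tau+2$'' to ``$\Com^{r'}$ is good for every intermediate $r'$''. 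Your proposed fix for that gap (absorbing the $o(\log n)$ perturbation into the $\varepsilon h\log n$ slack) is the right idea and matches the spirit of how the paper uses Lemma~\ref{lem:persist}.
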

  \onlyShort{
  \begin{proof}[Sketch] When a new set of nodes are elected to form the committee, they perform this role for $2 \tau$ rounds and then elect a new committee. We thus show that either (i) a newly formed committee does not last  for $2 \tau$ rounds (with probability at most $1/n^{2h}$)  {\em or} is incapable of electing a new committee (with probability at most $1 /n^{\ell_1} + 1/n^{2h}$). Thus a newly formed committee fails to survive and  elect a successor committee with probability at most  $p = (1 /n^{\ell_1} + 2/n^{2h} ) \in n^{-\Omega(1)}$.
  \end{proof}
  }
  
\onlyLong{
\begin{proof}
Let $r+2$ be a round in which a new set of committee members are invited to (cf. Algorithm\ \ref{algo:committee}) by a node $c^r \in \Core^r$. 
We now show that the probability with which (i) the committee that is selected by $c^r$ is good  and (ii) remains good until $r + 2\t + 2$ (when the next set of committee members are selected) is high. The requirements of the theorem will then be subsumed.

We now list some bad events;  at least one of them must occur for the committee to cease to be good.
\begin{compactenum}
\item With probability at most $1/n^{\ell_1}$, $c^r$ will receive fewer than  $h \log n$ samples. (cf. Corollary\ \ref{cor:sufficient})
\item With probability at most $1/n^{2h}$, 
more than $12 h \log^{\frac{5 - k}{4}} n$ 
samples received by $c^r$ will not be in $\Core^r$. (cf. Corollary\ \ref{cor:sufficient})
\item With probability at most $1/n^{2h}$, 
more than $12 \m \log^{(1-k)/2}n$ nodes in $\Com^{r+2}$ will be churned out between $r+2$ and $r+2\t+2$. (cf. Lemma\ \ref{lem:persist})
\end{compactenum}

Thus, for $r+2 \le r' < r+2\t+2$, $\Com^{r'}$ will not be good with probability at most $p$. Thus, the theorem follows.
\end{proof}
}

\begin{corollary}
  \label{cor:committee}
  Let $\ell$ be a suitably large number that respects the inequality $p \le n^{-\ell}$. Suppose at some round $r+2$, a new set of committee members have been selected by  $c^r \in \Core^r$. Let $g \ge 0$ be a random variable such that $r+g+2$ is the first round after $r+2$ when the committee ceases to be good. Then, $E(g) \ge n^{\ell}$. Furthermore, for any $0 \le i \le \ell$,
$
\Prob{g \le n^{\ell - i} } \le n^{-i}.
$
\end{corollary}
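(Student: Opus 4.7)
My plan is to derive the corollary as a direct quantitative consequence of Theorem~\ref{lem:committee}, treating round $r+2$ as the effective starting round. By hypothesis, a fresh committee has been formed at round $r+2$ by a leader $c^r \in \Core^r$, which is exactly the situation analyzed in the proof sketch of the theorem. That argument shows that within each subsequent epoch of $2\tau$ rounds the committee can fail to remain good only if one of three bad events occurs, each of probability at most $1/n^{\ell_1}$ or $1/n^{2h}$, so the per-epoch failure probability is bounded by $p\le n^{-\ell}$. Hence, running forward from round $r+2$, the number of rounds $g$ until the first ``not good'' round stochastically dominates a geometric random variable $Y$ with parameter $p$; i.e.\ $\Pr[g\ge k]\ge \Pr[Y\ge k]$ for every positive integer $k$.

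Given this stochastic domination, the first bound follows from the standard expectation of a geometric random variable:
\[
E[g]\;\ge\; E[Y]\;=\;\frac{1}{p}\;\ge\; n^{\ell}.
\]
For the tail bound, stochastic domination gives $\Pr[g\le k]\le \Pr[Y\le k]$, and a union bound (equivalently Bernoulli's inequality) yields $\Pr[Y\le k] = 1-(1-p)^{k} \le pk$. Setting $k=n^{\ell-i}$ therefore gives
\[
\Pr\!\left[g\le n^{\ell-i}\right]\;\le\; p\cdot n^{\ell-i}\;\le\; n^{-\ell}\cdot n^{\ell-i}\;=\;n^{-i},
\]
which is the claimed probabilistic bound.

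The main subtlety I anticipate is justifying that Theorem~\ref{lem:committee} can be ``restarted'' at round $r+2$ rather than at the original creation round $r_1$. This is clean because the theorem's argument is really an epoch-by-epoch union bound over independent random-walk samplings: each fresh invocation of the committee refresh procedure by a node in $\Core$ fails with probability at most $p$, independently of the failure events in earlier epochs. Thus, once we condition on a successful fresh election at $r+2$, the residual waiting time inherits the same geometric lower bound, and the two stated bounds follow immediately from the properties of the geometric distribution.
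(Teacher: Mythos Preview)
Your proposal is correct and is precisely the intended derivation: the paper states the corollary without proof, and your argument---restarting Theorem~\ref{lem:committee} at round $r+2$, then reading off $E[g]\ge E[Y]=1/p\ge n^{\ell}$ and $\Pr[g\le k]\le \Pr[Y\le k]\le pk$ from the geometric stochastic lower bound---is exactly how it is meant to follow. The ``restart'' justification you give matches the structure of the theorem's proof, which already phrases the failure analysis per epoch starting from an arbitrary refresh round with $c^r\in\Core^r$.
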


\subsection{Building Block: Constructing a Set of Randomly Distributed 
  Landmarks} \label{sec:tree}
Once we have succeeded in constructing a committee of $\Theta(\log n)$ 
nodes, we can extend the ``reach'' of this committee by creating a 
randomly distributed set of nodes that know about the committee members.
An easy but inefficient solution is to simply flood the ids of the 
committee members through the network, which requires a linear number of 
messages to be sent.
In this section, we will describe a more scalable approach (cf.\ 
  Algorithm~\ref{algo:tree}) that constructs a set of $\Omega(\sqrt{n})$ 
randomly distributed nodes that know the ids of the committee members and 
thus serve as ``landmarks'' for the committee.
The basic idea is that every current committee member selects 2 of its 
received samples and adds them as children.
These child nodes in turn then attempt to select 2 child nodes each and so 
forth.
Taking into account churn, and the fact that  only $n-o(n)$ nodes are 
able to select \emph{random} child nodes, we choose a tree depth that 
ensures with high probability that the committee members will succeed to 
construct a landmark set of size at least $\Omega(\sqrt{n})$, but containing no 
more than $O(n^{1/2+\delta}\log n)$ nodes.

Due to the high amount of churn and the fact that the committee members 
change over time, the committee nodes are responsible for rebuilding the 
set of landmarks every $O(\log n)$ rounds,
which will also ensure that the landmarks are randomly distributed among 
the nodes \emph{currently} in $\Core$.
We define $\Core^{[r_1,r_2]}$ as a shorthand for
$\Core^{r_1}\cap\dots\cap\Core^{r_2}$.

\begin{algorithm}[t]
  \begin{algorithmic}[1]
  \item[] \textbf{Assumption:} There is a committee $\Com$ of $\Theta(\log n)$ nodes each of
    which is carrying out some task $\mathcal{T}$ that requires all 
    committee nodes to simultaneously start executing this algorithm.
    Task $\mathcal{T}$ can either be a retrieval of a storage request of
    some data item $\I$''. 
  \item[]
  \item[] \textbf{Every $\tau$ rounds do:}
  \STATE Every committee node $v$ tries to add $\sqrt{n}$
  randomly chosen nodes to the landmark set of $\I$ by constructing a tree:
  \STATE Node $v$ contacts its $\Theta(\log n)$ received
  sample nodes and adds $2$ nodes $v_1$ and $v_2$ that are
  not yet part of the tree as its children (if possible).
  \STATE Nodes $v_1$ and $v_2$ in turn each select $2$ (unused) nodes among
  their own samples as their children and so on.
  The nodes in the tree keep track of a tree depth counter $\mu$ that is
  initialized to $0$ and increased every time a new level is added to the
  tree. 
  The construction stops at a tree depth of
  \begin{equation} 
    \label{eq:treedepth}
    \mu = \left\lceil\frac{\log_2 n - 2\left(\log_2\log n + \log 2\right)}
          {2\log_2\left(2
          \left(1 - \frac{1}{\log^{(k-1)/2}n}\right)
          \left(1 -\frac{1}{\log^{k-1} n}\right)
          \left(1 -\frac{1}{n^{3}}\right)\right)}\right\rceil.
  \end{equation}
  Note that nodes do not need to remember the actual tree structure.
  Every time a new level of $v$'s tree is created, the parent nodes send
  all $O(\log n)$ committee ids to their respective 2 newly added children.
  \STATE Every node that has become a landmark for $\I$, remains a
  landmark for $2\tau$ rounds and then simply discards any information
  about $\I$. 
\end{algorithmic}
  \caption{Constructing a Random Set of Landmarks}
  \label{algo:tree}
\end{algorithm}

\begin{lemma} \label{lem:tree}
  Consider any round $r\ge 2\tau$ and suppose that some node $u \in 
  \Core^{r}$ executes Algorithm~\ref{algo:tree} for storing item $\I$ and
  let $T$ be the set of landmarks created for $\I$.
  Then the following holds with high probability for a polynomial number 
  of rounds starting at any round $r_1\ge r+2\tau$. For $r_2=r_1+4\tau$, 
  there exists a set $M_\I \subseteq T \cap \Core^{[r_1,r_2]}$ of 
  landmarks
  such that every node in $M_\I$ is distributed with probability in
  $[1/17n,3/2n]$ among the nodes in $\Core^{[r_1,r_2]}$ and 
  \onlyShort{$ \sqrt{n} \le |M_\I| \le |T| \le O(n^{1/2+\delta} \log 
    n).$}
  \onlyLong{
  \begin{equation} \label{eq:treesize}
    \sqrt{n} \le |M_\I| \le |T| \le O(n^{0.5+\delta} \log n),
  \end{equation}
  for any fixed constant $\delta>0$.
  }
\end{lemma}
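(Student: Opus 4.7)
The proof rests on the Soup Theorem (Theorem~\ref{thm:soup}) together with the committee-persistence guarantees of Theorem~\ref{lem:committee} and Corollary~\ref{cor:committee}. I first condition on the high-probability event that $\Com$ remains good throughout the polynomial window of interest; on this event, in every round $r'$ in the window at least $(1-\varepsilon)h\log n$ committee members lie in $\Core^{r'}$, and each such member receives $\Theta(\log n)$ random-walk samples whose sources are distributed over $\Core^{r'-2\tau}$ with density in $[1/17n, 3/2n]$. These are the two probabilistic inputs that drive the remainder of the argument.

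For the lower bound $|M_\I|\ge\sqrt{n}$, I analyze a single root's tree level by level. Let $N_i$ denote the number of level-$i$ descendants that lie in $\Core^{[r_1,r_2]}$ and survive throughout $[r_1,r_2]$. A node at level $i-1$ picks two as-yet-unused samples as its children; an individual sample is \emph{useful} (in $\Core^{r_1}$, not churned out during $[r_1,r_2]$, and not already occupied in the current tree) with probability at least
\[\rho\ \ge\ \left(1-\tfrac{1}{\log^{(k-1)/2}n}\right)\left(1-\tfrac{1}{\log^{k-1}n}\right)\left(1-\tfrac{1}{n^{3}}\right),\]
with the three factors arising respectively from $|\Core|\ge n-O(n/\log^{(k-1)/2}n)$, from the $O(\tau\cdot n/\log^{k}n)=O(n/\log^{k-1}n)$ churn during $[r_1,r_2]$, and from the running tree size being $O(\sqrt{n})\ll|\Core|$. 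Since each parent has $\Theta(\log n)$ samples available, Chernoff at each level and a union bound over levels give $N_i\ge 2\rho\,N_{i-1}$ with probability $1-n^{-\Omega(1)}$; summing over the $\Theta(\log n)$ roots and plugging in $\mu$ from Eq.~\eqref{eq:treedepth} yields $|M_\I|\ge\Theta(\log n)\cdot(2\rho)^\mu\ge\sqrt{n}$.

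The upper bound $|T|\le O(n^{1/2+\delta}\log n)$ is immediate from the tree shape: at most $\Theta(\log n)$ roots, branching at most $2$, and depth exactly $\mu$, so $|T|\le\Theta(\log n)\cdot 2^{\mu+1}$; since the denominator of Eq.~\eqref{eq:treedepth} tends to $2$, one obtains $2^\mu\le n^{1/2+\delta}/\mathrm{poly}(\log n)$ for any fixed $\delta>0$. The distributional claim that every $m\in M_\I$ lies in $\Core^{[r_1,r_2]}$ with density in $[1/17n,3/2n]$ is then immediate from Theorem~\ref{thm:soup}: each landmark was the source of the random walk that delivered its id to its parent, and the Soup Theorem asserts exactly this range for the distribution of such sources.

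The main obstacle is the statistical dependence between samples used at different tree nodes, since all tokens flow through the same dynamic soup and the Soup Theorem provides only marginal per-token distributional guarantees. I plan to handle this by exposing the samples in breadth-first order and applying an Azuma--Hoeffding martingale argument to the level-wise growth, using only the marginal density bounds. A secondary obstacle is lifting the single-reconstruction guarantee to a polynomial lifetime: since the algorithm rebuilds the landmark set every $\tau$ rounds and each landmark lives for $2\tau$, at least one reconstruction whose landmarks remain valid throughout $[r_1,r_1+4\tau]$ can always be extracted from the (at most five) reconstructions overlapping this interval, and a union bound over the $n^{O(1)}/\tau$ reconstructions in the window completes the proof with $n^{-\Omega(1)}$ failure probability.
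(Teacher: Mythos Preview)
Your proposal follows essentially the same route as the paper: the upper bound on $|T|$ comes from the deterministic tree shape (branching~$2$, depth~$\mu$, $\Theta(\log n)$ roots) and the observation that the denominator of Eq.~\eqref{eq:treedepth} tends to $2$, while the lower bound on $|M_\I|$ proceeds level by level using exactly the three correction factors you isolate---core membership, churn during $[r_1,r_2]$, and collision with previously chosen tree nodes---together with the committee-persistence guarantees. The only structural difference is that the paper first computes $\Exp{X_\mu}\ge 2\sqrt n$ via the recursion $\Exp{X_i}\ge 2\rho\,\Exp{X_{i-1}}$ and then invokes a single Chernoff bound at depth $\mu$, whereas you apply concentration at every level and union-bound; your version is arguably cleaner, and your explicit acknowledgement of the inter-sample dependence (with the proposed martingale fix) addresses a point the paper glosses over.
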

\onlyLong{
\begin{proof}
  We will first argue the right hand side of \eqref{eq:treesize}, namely
  that the total number of landmark nodes is sublinear.
  To this end, we bound the maximal tree size of any tree created by a 
  committee member.
  For any constant $\delta>0$, there is a sufficiently large $n$, such that
  $$
  2\log_2 \left(2\left(1 - \frac{1}{\log^{(k-1)/2}n}\right)
  \left( 1-\frac{1}{\log^{k-1} n}\right)
  \left( 1-\frac{1}{n^{3}}\right)\right) \ge \frac{1}{\frac{1}{2}+\delta}.
  $$
  This allows us to bound the tree depth (cf.\ \eqref{eq:treedepth}) as
  $$
  \mu \le \frac{\log_2 n}{2\log_2\left(2
          \left(1 - \frac{1}{\log^{(k-1)/2}n}\right)
          \left(1-\frac{1}{\log^{k-1} n}\right)
          \left(1-\frac{1}{n^{3}}\right)\right)}
      \le \left(\frac{1}{2}+\delta\right)\log_2 n.
  $$
  In the worst case, all parent nodes in the tree construction always add
  $2$ child nodes, yielding a tree size of at most
  $$
  2^{(0.5+\delta)\log_2 n+1} - 1 \in O(n^{0.5+\delta}).
  $$
  Recalling that we have at most $\Theta(\log n)$ committee members,
  w.h.p., we
  get the upper bound as stated in \eqref{eq:treesize}.

  For the lower bound on $|M_\I|$, we look at the trees created by
  the committee members.
  By Corollary~\ref{cor:committee} we have that, w.h.p., 
  any node $w \in \Com^{r_1}$ receives
  $(1-\eps)h \log n \in \Theta(\log n)$ samples that are distributed with 
  probability
  $[1/17n,3/2n]$ among the nodes in $\Core^{[r_1,r_2]}$.

  Consider any parent node $v$ and assume that $v \in \Core^{[r_1,r_2]}$.
  We will bound the probability that a potential child node has already 
  been chosen as a child by some other parent node in a tree.
  By the upper bound of \eqref{eq:treesize}, we know that there are at
  most $\Theta(n^{1/2+\delta}\log n)$ nodes in the tree at any point.
  Suppose that node $v$ that has received a sample of some node $w'$ and 
  wants to add $w'$ as its child.
  Since the sampling is performed by doing independent random walks,
  the event that $w'$ has already been chosen as a child by some other 
  node (possibly in a distinct tree) is independent from $w'$ being 
  sampled by $v$.
  For sufficiently large $n$, we have
  $$
  \Prob{\text{$w'$ is already in tree} \wedge \text{w' sampled by $v$}}
  \le \frac{3n^{1/2+\delta} (1-\eps) h \log n}{2n} \le \frac{3(1-\eps)h\log 
    n}{n^{1/2-\delta}}.
  $$
  Since the parent node $v$ is in $\Core^{[r_1,r_2]}$, it follows by 
  Lemma~\ref{lem:persist} that $v$ has $h'\log n = \Theta(\log n)$ samples 
  in $\Core^{[r_1,r_2]}$ w.h.p.
  For $h'\ge 8$, the probability of $v$ not receiving at least $2$ unused 
  child nodes is at most
  $$
  \left(\frac{3(1-\eps) h \log n}{n^{1/2-\delta}}\right)^{h'\log n} \le 
  \frac{1}{n^{2\log n}} \le \frac{1}{n^3}
  $$

  Let $X_i$ be the random variable that represents the number of nodes in
  $M_\I$ up to (and including) tree level $i$;  recall that, by 
  definition, these nodes are in $\Core^{[r_1,r_2]}$.
  In addition to a fraction of $(1-\frac{1}{n^3})$ nodes that are lost due 
  to already chosen child nodes, the expectation of $X_i$ is reduced by a 
  factor of at most $(1-\frac{1}{\log^{(k-1)/2}n}$ to compensate for the 
  nodes not in $\Core^{[r_1,r_2]}$, and by the nodes that are churned out 
  during $[r_1,r_2]$, which is at most $(1-\frac{1}{\log^{k-1}})$.
  \begin{align}
  \Exp{X_i}
  & \ge 2 \Exp{X_{i-1}}
          \left(1 - \frac{1}{\log^{(k-1)/2}n}\right)
          \left(1-\frac{1}{\log^{k-1} n}\right)
          \left(1-\frac{1}{n^{3}}\right).\\
          \intertext{By Corollary~\ref{cor:committee}, we know that the 
            expected committee size (i.e.\ $\Exp{X_0}$) is at least 
            $(1-\eps)h\log n$, which shows that}
  \Exp{X_i}
  & \ge (1-\eps)h\log n \left(2\left(1 - \frac{1}{\log^{(k-1)/2}n}\right)
    \left(1-\frac{1}{\log^{k-1} n}\right)
    \left(1-\frac{1}{n^{3}}\right)\right)^{i}. \label{eq:expbound}
  \end{align}
  To lower bound the expected size of $M_\I$, we need to plug in the tree 
  height (cf.\ \eqref{eq:treedepth}) for $i$, which reveals that 
  $\Exp{X_\mu}\ge 2\sqrt{n}.$
  We use a Chernoff bound to show the lower bound on $M_\I$ as required by
  \eqref{eq:treesize}.
  From Theorem 4.5 in \cite{upfal}, it follows that
  $$
  \Prob{X \le \left(1-\frac{2\log n}{\sqrt{n}}\right)2\sqrt{n}} \le 
  \exp{-\left(-\frac{4\sqrt{n}\log n}{2\sqrt{n}}\right) } = \frac{1}{n^2},
  $$
  which proves that $M_\I \in \Omega(\sqrt{n})$ with high probability.
\end{proof}
}
\subsection{Storage and Retrieval Algorithms} \label{subsec:sr}
Now that we have general techniques for maintaining a committee of nodes 
and creating a randomly distributed set of landmarks for this committee 
(cf.\ Sections~\ref{sec:committee} and \ref{sec:tree}), we will use these 
methods to implement algorithms for storage and retrieval of data items.

\begin{definition} \label{def:available}
We say that a data item $\I$ is \emph{available in round $r$}, if the
probability of any node in $\Core^{[r,r+\tau]}$ to be in the current set of 
landmarks $M^r_\I$ is at least
$\frac{1}{\Theta(\sqrt{n})}$.
\end{definition}
It follows immediately from Corollary~\ref{cor:committee} and Lemma 
~\ref{lem:tree} that if a data item $\I$ is stored by a node $u \in 
\Core^{r_1}$ in some round $r_1$, then $\I$ will be available in the network 
for a polynomial number of rounds starting from $r_1$, with high probability.
\onlyLong{Clearly, the same is true for any later interval of polynomial number of 
rounds if $\I$ is available at its first round.}

For storing some data item $\I$ by some node $u \in \Core$, 
we combine the committee maintenance and landmark construction.
In more detail, node $u$ first creates a committee of $\Theta(\log n)$ 
nodes (cf.\ Algorithms~\ref{algo:committee}), which will be responsible for 
storing the data item, i.e., every committee member will store a copy of 
$\I$.
The committee immediately starts creating a set of $\Omega(\sqrt{n})$ 
landmark nodes, which know the ids of the committee members, but do \emph{not} 
store $\I$ itself.
Choosing these landmark nodes almost uniformly at random (cf.\ 
Lemma~\ref{lem:tree}) from the current $\Core$ set, ensures that the 
committee members can be found efficiently by the data retrieval mechanism 
described below

It follows immediately from Corollary~\ref{cor:committee} and Lemma 
~\ref{lem:tree} that if a data item $\I$ is stored by a node $u \in 
\Core^{r_1}$ in some round $r_1$, then $\I$ will be available in the 
network for a polynomial number of rounds starting from $r_1$, with high 
probability.
Owing to the memoryless nature of the persistence of the committee (cf. Theorem\ \ref{lem:committee} and Corollary\ \ref{cor:committee}), the same holds with high probability for any later interval of 
polynomial number of rounds if the data was stored in a good committee at the start of the 
interval.

\begin{algorithm}[t]
  \begin{algorithmic}[1]
  \item[] Node $u$ issues an insertion request in round $r$ for data $\I$.
  \STATE Node $u$ initiates Algorithm~\ref{algo:committee} to create a
  committee $\Com$  and requests the committee nodes to store $\I$.
  Note that the committee nodes will continue to store $\I$ on $u$'s
  behalf, even if $u$ has long been churned out.
  \STATE Moreover, $u$ instructs the committee members to execute
  Algorithm~\ref{algo:tree} and repeatedly create landmark sets of
  $\Omega(\sqrt{n})$ nodes that will respond to retrieval requests of
  $\I$.
  \end{algorithmic}
  \caption{Persistently Storing a Data Item}
  \label{algo:store}
\end{algorithm}

\begin{theorem}[Data Storage] \label{thm:storage}
  Consider any round $r \ge 2\tau$. There is a set $A$ of at least 
  $n-o(n)$ nodes, such that any data item $\I$ stored by a node in $A$ via 
    Algorithm~\ref{algo:store} in round $r$ is available for a polynomial 
  number of rounds starting from round $r+2\tau$, with high probability, in a 
  network with churn rate up to $O(n/\log^{1+\delta}n)$ per round.
\end{theorem}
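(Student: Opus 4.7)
The plan is to take $A := \Core^r$, which by Theorem~\ref{thm:soup} has cardinality at least $n-o(n)$. Fix any $u \in A$ that executes Algorithm~\ref{algo:store} in round $r$; I want to verify Definition~\ref{def:available} at every round $r'$ of a polynomial window $[r+2\tau,\, r+2\tau+n^{\ell}]$, namely that a current landmark set $M^{r'}_\I$ exists and every node in $\Core^{[r',r'+\tau]}$ lies in it with probability at least $1/\Theta(\sqrt{n})$.

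First, I would invoke the committee machinery. Because $u \in \Core^r$, Lemma~\ref{lem:sufficient} ensures that $u$ receives enough well-distributed samples for Algorithm~\ref{algo:committee} to produce a good initial committee $\Com$; and by Corollary~\ref{cor:committee}, tuning the constants $\alpha$ and $h$ sufficiently large, $\Com$ remains good throughout the entire window with probability $1-n^{-\Omega(1)}$. Condition on this event. Every $\tau$ rounds the current committee invokes Algorithm~\ref{algo:tree}, and Lemma~\ref{lem:tree} yields, for each such rebuild started in a round $r_1$ of the window, a landmark set $M_\I$ of size at least $\sqrt{n}$ whose members lie in $\Core^{[r_1,\, r_1+4\tau]}$ with per-landmark probability in $[1/17n,\, 3/2n]$.

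To close the argument, at any current round $r'$ the active landmark set $M^{r'}_\I$ was built in some rebuild round $r_1$ with $r_1 \le r' < r_1+2\tau$, so $\Core^{[r',r'+\tau]} \subseteq \Core^{[r_1,r_1+4\tau]}$. Hence for any $v$ in this set,
\[
  \Prob{v \in M^{r'}_\I} \ge 1 - \left(1 - \frac{1}{17n}\right)^{|M^{r'}_\I|} = \Omega\!\left(\frac{1}{\sqrt{n}}\right),
\]
which meets Definition~\ref{def:available}. A union bound over the $O(n^\ell/\tau)$ rebuild phases finishes the proof. The main obstacle is parameter bookkeeping: the failure probabilities of Corollary~\ref{cor:committee} and Lemma~\ref{lem:tree} must simultaneously be driven below $n^{-(\ell+2)}$ so the union bound over a polynomially long window still yields a high-probability guarantee; this pins down how large the constants $\alpha, h$ (in Algorithm~\ref{algo:committee}) and the $\delta$ in the tree-depth expression of Algorithm~\ref{algo:tree} must be chosen relative to the desired exponent $\ell$.
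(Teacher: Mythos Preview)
Your proposal is correct and follows the same route the paper takes: the paper's own argument for Theorem~\ref{thm:storage} is literally the one-line remark that availability ``follows immediately from Corollary~\ref{cor:committee} and Lemma~\ref{lem:tree}'' once one sets $A=\Core^r$, and you have simply unpacked that into the explicit steps (committee persistence, periodic landmark rebuilds, verifying Definition~\ref{def:available}, union bound over phases). Your identification of the parameter bookkeeping as the only residual work is accurate and matches what the paper leaves implicit. One small remark: your lower bound $\Pr[v\in M^{r'}_\I]\ge 1-(1-1/17n)^{|M^{r'}_\I|}$ tacitly assumes the landmarks are independent; since Algorithm~\ref{algo:tree} actually forces them to be \emph{distinct}, you can get the same $\Omega(1/\sqrt n)$ bound more directly via linearity of expectation, $\Pr[v\in M^{r'}_\I]=\sum_i\Pr[m_i=v]\ge |M^{r'}_\I|/(17n)$, without any independence assumption.
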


Conditioning on the fact that a data item $\I$ is available in some round 
$r_i$, gives us a high probability bound that $\I$ will be available for 
another polynomial number of rounds, for any $r_i \ge r_1$. 

\begin{corollary} \label{cor:available}
  Suppose that Algorithm~\ref{algo:store} is executed for some data item 
    $\I$ since round $r_1$.
  If $\I$ is available in some round $r_i\ge r_1$, then $\I$ will be 
  available for a polynomial number of rounds starting from $r_i$ with 
  high probability.
\end{corollary}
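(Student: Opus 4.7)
The plan is to invoke the same machinery behind Theorem~\ref{thm:storage} with a time shift, using availability at $r_i$ as the initial condition in place of a fresh storage operation. The only real work is to show that availability at $r_i$ is already strong enough to serve as such an initial condition; once that is done, the committee-maintenance and landmark-construction analyses apply verbatim starting at $r_i$.

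First, I would unpack what availability in round $r_i$ buys us. By Definition~\ref{def:available}, the landmark set $M^{r_i}_\I$ covers a $1/\Theta(\sqrt{n})$ fraction of $\Core^{[r_i,r_i+\tau]}$. Tracing back through Lemma~\ref{lem:tree}, such a landmark set could only have been produced by a \emph{good} committee $\Com^{r_i}$ in the sense of Theorem~\ref{lem:committee}: the tree-growth argument in the proof of Lemma~\ref{lem:tree} required $(1-\varepsilon)h\log n$ of the roots to lie in the current $\Core$, otherwise the expectation bound~\eqref{eq:expbound} and the subsequent Chernoff estimate would fail. Thus availability at $r_i$ certifies the existence of a good committee at $r_i$ whose members hold a copy of $\I$.

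Second, I would re-apply the committee-maintenance analysis with $r_i$ as the starting round. The recurrence encapsulated by Theorem~\ref{lem:committee} and Corollary~\ref{cor:committee} is memoryless: the bound $p \le n^{-\ell}$ on the probability that a good committee fails to elect a good successor during the next $2\tau$ rounds depends only on the current committee membership and the churn schedule over the next $O(\log n)$ rounds, not on earlier history. Because the adversary is oblivious, its schedule after $r_i$ is independent of the random walks used to elect the committee at $r_i$, so the recurrence can be iterated afresh from $r_i$. A union bound over the $n^{O(1)}$ refresh cycles in $[r_i, r_i + \operatorname{poly}(n)]$ then shows that the committee remains good throughout this window with high probability.

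Third, for each $\tau$-round epoch in this window Lemma~\ref{lem:tree} guarantees that the (good) committee rebuilds a landmark set of size between $\sqrt{n}$ and $O(n^{1/2+\delta}\log n)$ whose elements are almost uniformly distributed in $\Core^{[\,\cdot\,,\,\cdot\,]}$, which is exactly what Definition~\ref{def:available} demands. A final union bound over the polynomially many rounds yields the high-probability statement. The main obstacle is really the first step---arguing that availability alone (a property of landmarks) already implies the structural goodness of the underlying committee; once this implication is in hand, the rest is just re-running the previously established high-probability estimates from a shifted starting point, which is permitted by the oblivious adversary assumption and the memoryless structure of the committee recurrence.
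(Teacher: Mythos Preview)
Your approach matches the paper's: the corollary is stated without a separate proof and is justified only by the preceding remark that the committee persistence is memoryless (Theorem~\ref{lem:committee} and Corollary~\ref{cor:committee}), after which Lemma~\ref{lem:tree} re-applies from the shifted starting round $r_i$. Your first step---arguing that availability at $r_i$ already certifies a good committee at $r_i$---makes explicit a link the paper leaves implicit, but the overall route is the same.
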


For efficient retrieval of an available data item, we will again use the 
committee maintenance and landmark construction techniques.
To distinguish between the nodes that are serving as landmarks or 
committee members for the storage procedures from the committee and 
landmark sets that are created for data retrieval, we will call the former 
\emph{storage landmarks}, resp.\ \emph{storage committee} and the latter 
\emph{search landmarks} resp.\ \emph{search committee}.

When a node $u\in \Core^r$ executes Algorithm~\ref{algo:search} to 
retrieve some available data item $\I$, it first creates a search 
committee via Algorithm~\ref{algo:committee}, which in turn is responsible 
for creating a set of $\Omega(\sqrt{n})$ search landmarks.
These search landmarks have high probability to be reached by any of the 
random walks originating from one of the storage landmarks that were 
previously created by the storage committee members.
In more detail, we can show that with high probability,  
$\Omega(\sqrt{n})$ search landmark nodes are from the same core set from 
which the $\Omega(\sqrt{n})$ storage landmarks have been chosen and 
therefore, within $O(\log n)$ rounds, a search landmark is very likely to 
get to know the id of one of the storage landmarks.

\begin{algorithm}[t]
  \begin{algorithmic}[1]
  \item[] Node $u$ issues a retrieval request in round $r_1$ for data $\I$.
  \STATE Node $u$ initiates Algorithm~\ref{algo:committee} to create a
  committee $\Com$ which will automatically dissolve itself after
  $\Theta(\log n)$ rounds. 
  \STATE Node $u$ instructs the committee members to execute
  Algorithm~\ref{algo:tree} and repeatedly create a landmark set of
  $\Omega(\sqrt{n})$ nodes.
  Every landmark node $w$ contacts all nodes of received samples and inquires
  about $\I$.
  If $\I$ is found, $w$ directly reports this to $u$.
  \end{algorithmic}
  \caption{Retrieval of a Data Item}
  \label{algo:search}
\end{algorithm}
\begin{theorem}[Data Retrieval] \label{thm:retrieval} \label{thm:search}
  Consider any round $r_1 \ge 2\tau$. There is a set $A$ of at least $n-o(n)$
  nodes, such that any available data item $\I$ can be retrieved by any $u 
  \in A$ via Algorithm~\ref{algo:search} in $O(\log n)$ rounds, with high 
  probability, in a network with churn rate up to $O(n/\log^{1+\delta}n)$ 
  per round.
\end{theorem}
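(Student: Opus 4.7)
The plan is to reduce retrieval to a birthday-paradox style collision argument between two nearly uniform subsets of $\Core$, each of size $\Omega(\sqrt{n})$: the storage landmark set $M_\I$ maintained by the storage committee of $\I$ (which exists because $\I$ is available, by Definition~\ref{def:available} and Theorem~\ref{thm:storage}), and a fresh search landmark set $M^{\text{srch}}$ that $u$ will build on the fly. I would first define $A$ to consist of those nodes that lie in $\Core^{[r_1,r_1+O(\log n)]}$ and can successfully elect a good committee; the Soup Theorem together with Theorem~\ref{lem:committee} gives $|A|\ge n-o(n)$, so one may henceforth assume $u\in A$.

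Next, for $u\in A$, I would apply Algorithm~\ref{algo:committee} to produce in $O(\log n)$ rounds a good search committee $\Com$ of $\Theta(\log n)$ nodes from the current $\Core$, and then feed $\Com$ into Algorithm~\ref{algo:tree} to obtain in $O(\log n)$ further rounds the search landmark set $M^{\text{srch}}$ of $\Omega(\sqrt{n})$ nodes, each distributed over $\Core^{[r_1,r_1+O(\log n)]}$ with probability in $[1/17n,3/2n]$ (Lemma~\ref{lem:tree}).

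The heart of the proof is to show that during these $O(\log n)$ rounds, some search landmark collides with a storage landmark. Each search landmark $w\in M^{\text{srch}}$ receives $\Theta(\log n)$ fresh random walk samples per round, and by the Soup Theorem each such sample hits any fixed storage landmark with probability at least $1/(17n)$. Hence the probability that $w$ queries at least one of the $\Omega(\sqrt{n})$ storage landmarks in a single round is $\Omega(\log n / \sqrt{n})$; summing over the $\Omega(\sqrt{n})$ search landmarks, the expected number of collisions per round is $\Omega(\log n)$. A Chernoff bound then yields at least one collision with probability $1-n^{-\Omega(1)}$. As soon as some $w$ identifies a storage landmark $w'$, it reads off the storage committee ids from $w'$ and reports them back to $u$ along the $O(\log n)$-depth tree (the committee ids being replicated at every parent by Algorithm~\ref{algo:tree}), completing the retrieval within the stated round budget.

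The main obstacle is justifying the Chernoff bound in the face of dependencies: the random walks feeding different search landmarks live in a shared dynamic graph and may share intermediate nodes, and the samples arriving at a single search landmark are not strictly independent. I would address this by invoking the Soup Theorem's per-walk guarantee, which in the auxiliary graph $\bG$ treats each random walk as an independent Markov chain, and then coupling the family of hit indicators to an independent Bernoulli sequence of slightly smaller success probability. A secondary concern --- that $u$ must itself persist for the $O(\log n)$-round execution and that a route back to $u$ must remain available --- is absorbed into the definition of $A$ and into the committee-maintenance guarantees of Algorithm~\ref{algo:committee}, which keep a clique of $\Theta(\log n)$ descendants of $u$ alive throughout the execution and thus preserve the reporting path back to $u$.
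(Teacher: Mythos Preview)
Your proposal is correct and follows essentially the same route as the paper: take $A$ to be the nodes in the relevant $\Core$, build a search committee and search-landmark tree via Algorithms~\ref{algo:committee} and~\ref{algo:tree}, and then argue a birthday-paradox collision between the $\Omega(\sqrt{n})$ search landmarks and the $\Omega(\sqrt{n})$ storage landmarks guaranteed by availability (Definition~\ref{def:available}).

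The only notable difference is in the amplification step. You aim for high probability in a \emph{single} round by counting $\Theta(\log n)$ samples per search landmark, getting $\Omega(\log n)$ expected collisions, and invoking Chernoff---which then forces you to worry about dependencies and propose a coupling. The paper instead takes the cruder route: using just one good sample per search landmark gives per-landmark success probability $\Omega(1/\sqrt{n})$ (directly from Definition~\ref{def:available}), so the product over $\Theta(\sqrt{n})$ search landmarks yields per-round failure probability at most $e^{-\Omega(1)}$, a constant; iterating over $\Theta(\log n)$ \emph{independent} rounds (fresh random walks each round) then gives $n^{-\Omega(1)}$. This sidesteps your dependency concern entirely, at the cost of needing the search landmarks to persist for $\tau$ rounds---which is already guaranteed since they lie in $\Core^{[r_1,r_1+\tau]}$ by Lemma~\ref{lem:tree}. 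Both arguments work; the paper's is slightly shorter. One minor point: in Algorithm~\ref{algo:search} the successful landmark $w$ reports \emph{directly} to $u$ (recall the model allows contacting any node whose id is known), so you need not route the answer back through the tree.
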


\onlyLong{
\begin{proof}
  By assumption, item $\I$ is available in $r_1$, wich means that every 
  node $v \in \Core^{[r_1,r_1+\tau]}$ has probability at least 
  $\frac{1}{\Theta(\sqrt{n})}$ to be in $M^{r_1}_\I$.
  Moreover, by Corollary~\ref{cor:available}, item $\I$  will still be 
  available for a polynomial number of rounds with high probability.
  By Lemma~\ref{lem:tree}, we know that, after $O(\log n)$ rounds, the 
  committee created by $u$ has constructed a set $T$ of $\Omega(\sqrt{n})$ 
  nodes in $\Core^{[r_1,r_1+\tau]}$ that will report any encounter with a 
  landmark of $\I$ to $u$.

  For any $v \in \Core^{[r_1,r_1+\tau]}$, we know by 
  Lemmas~\ref{lem:bad_rand_walks} and \ref{lem:sufficient} that $v$ 
  receives at least one walk that originated from some
  $w \in \Core^{[r_1,r_1+\tau]}$ w.h.p., thus the probability of $v$ not 
  getting to know the id
  of a landmark node $M_i^{r_1}$ in round $r$ is at most
  $1-\frac{1}{\Theta(\sqrt{n})}$.
  Applying the same argument to each of the $\Omega(\sqrt{n})$ nodes in $T$, 
  shows that the probability of none of them finding a landmark node for $\I$ 
  is at most $\left(1-\frac{1}{\Theta(\sqrt{n})}\right)^{\Theta(\sqrt{n})}
  \le e^{-\Omega(1)}$.
  Note that, for the next $\tau (\in \Theta(\log n))$ rounds, we have the same 
  probabilities for the nodes in $T$ to encounter a landmark node of $\I$.
  (This nodes are in $\Core^{[r_1,r_1+\tau]}$ by assumption, thus they 
  will not be subjected to churn before round $r_1+\tau$.)
  It follows that, within $O(\log n)$ rounds, one node in $T$ will receive 
  a sample from a landmark of $\I$ and thus $u$ will be able to retrieve 
  $\I$ with high probability.
\end{proof}
}

%
\onlyLong{
\subsection{Reducing the number of bits stored using erasure codes}\label{sec:erasure-codes}

We can further reduce the total  number of bits in storing large data items using the standard
technique of erasure codes. We next describe  how to incorporate such a technique in our scheme.

Given a data item ${\cal I}$ to be stored in the network, the algorithm described in the 
previous sections simply replicates ${\cal I}$ at a set of appropriate number of nodes
in the network. 
The drawback of replication is the consumption of a high amount of network bandwidth and storage capacity.
The other method is to apply erasure codes (e.g., Information Dispersal Algorithm (IDA) \cite{r89}) 
to encode a data item into a longer message such that a fraction of the data suffices to 
reconstruct the original data item.
In particular, when applying IDA to storage systems, a data item ${\cal I}$ of length $|{\cal I}|$ is 
divided into $L$ parts, each of length $|{\cal I}|/K$ so that every $K$ pieces 
suffice for constructing ${\cal I}$.
The total size of all piecies equals $L|{\cal I}|/K$, and hence IDA is space efficient 
since we can choose the blowup ratio $L/K$, that determines the space 
overhead incurred by the encoding process, to be close to $1$.

Here, we show that our algorithm described in Section \ref{sec:storage} 
can be simply modified to apply erasure codes for storing data items in the
network; however, the most challenge will be maintaining at least $K$ pieces of ${\cal I}$ under
node churn (the number of nodes storing pieces of ${\cal I}$ might be decreased rapidly with time).

Suppose that a node $u$ wants to carry out an insersion process of a data
item $\cal{I}$ in round $r_1$.
First, $u$ creates a committee $\Com^{r_1}$ of $h\log n$ members for $\cal{I}$ by executing 
Algorithm~\ref{algo:committee}, applies IDA to split $\cal{I}$ into $h\log n$ pieces, 
each of size $|{\cal I}|/((h-2)\log n)$, and then requests each member of
$\Com^{r_1}$ to store one of these pieces. 
Next, $u$ instructs the committee members to execute Algorithm~\ref{eq:treedepth} and repeatedly 
create landmark sets of $\Omega(\sqrt{n})$ nodes that will respond to retrieval requests of $\cal{I}$.
Now, consider round $r_2=r_1+\tau$, in which members of $\Com^{r_1}\cap V^{r_2}$ 
execute Algorithm~\ref{algo:committee} to construct a new committee of ${\cal I}$. 
We slightly modify the committee maintainace stage in Algorithm~\ref{algo:committee} as follows.
We first bound the size of $\Com^{r_1} \cap V^{r_2}$.
Note that the probability of any node $s \in \Core^{r_1}$ to be in
$\Com^{r_1}$ is bounded by $[\frac{1}{17n},\frac{3}{2n}]$.
Since the adversary is oblivous and the churn rate is ${O}(n/\log^{1+\delta} n)$, 
the probability of a node $v\in \Com^{r_1}$ is churned out
in a later round $r>r_1$ is at most 
\begin{equation}\label{bound-1}
\frac{3}{2n}\cdot \frac{4n}{\log^{1+\delta} n}=\frac{6}{\log^{1+\delta} n}.
\end{equation}
By taking a union bound on (\ref{bound-1}) over rounds in $[r_1,r_2]$, we get that 
\[
  \Prob{(v \in \Comm^{r_1}) \wedge (v \notin V^{r_2})} \le \frac{6}{\log^{\delta}n}.
\]
Let $X$ be the random variable determined by the number of nodes in
$\Com^{r_1}$ that are subjected to churn in $[r_1,r_2]$.
Since $|\Com^{r_1}| = h\log n$, it follows that
\[
  \Exp{X} \le 6h \log^{1-\delta} n.
\] 
By using a standard Chernoff bound, we get 
$\Prob{X \ge 2\log n > 6\Exp{X}} \le 2^{-2\log n},$
which shows that, with probability at least $1-\Theta(n^{-2})$, the size of
$\Com^{r_1}$ is reduced by at most $2\log n$ in $[r_1,r_2]$, i.e.,
\[
 |\Comm^{r_1}\cap V^{r_2}| \ge (h-2)\log n.
\] 
Let $c^{r_2}$ be the node defined in Algorithm~\ref{algo:committee}
which, by the definition, knows ids of all other nodes in $\Com^{r_1} \cap V^{r_2}$.
Therefore, with high probability, ${\cal I}$ can be reconstructed at $c^{r_2}$ in round $r_2+1$.
At round $r_2+2$, $c^{r_2}$ chooses $h\log n$ random walks that stopped at $c^{r_2}$ in round $r_2$ and
invites their source nodes to form the new committee in $r_2+3$.
At the same time (round $r_2+2$), $c^{r_2}$ reconstructs the original data item ${\cal I}$, 
replicates ${\cal I}$ by applying IDA, and then requests, along the invitation of joining the new committee, 
each candidates of the new committee to store one piece of the resulting parts.

To retrieve a data item ${\cal{I}}$, node $u$ interested in $\cal{I}$ 
creates a committee, and then requests the committee members to execute Algorithm~\ref{eq:treedepth} 
and repeatedly create a landmark set of $\Omega(\sqrt{n})$ nodes.
Every landmark node $w$ contacts all nodes of received samples and inquires
about $\cal{I}$.
If a piece of $\cal{I}$ is found, say at node $v$, then $w$ directly reports this to $u$.
Note that $v$ is a member of the committee storing $\cal{I}$, and hence knows the ids of all
other members of this committee.
This enables $u$ to contact the committee of $\cal{I}$ and to reconstruct the original item at $u$.
}

\section{Conclusion}
We have presented efficient algorithms for robust storage and retrieval of data items in a highly dynamic setting where a large number of nodes can be subject to churn in every round and the topology of the network is under control of the adversary.  An important open problem  is finding  lower bounds for the maximum amount of churn that is tolerable by any algorithm with a sublinear message complexity.  For random walks based approaches, we conjecture that there is a fundamental limit at $o(n/\log n)$ churn, for the simple reason that if churn can be in order $\Omega(n/\log n)$, the adversary can subject a constant fraction of the nodes to churn by the time a random walk has completed its course.  In this context, it will be interesting to determine exact tradeoff between message complexity and tolerable amount of churn per round.

\bibliographystyle{plain}
\bibliography{papers,papers1}

\begin{thebibliography}{10}

\bibitem{ganesh07}
Peer counting and sampling in overlay networks based on random walks.
\newblock {\em Distributed Computing}, 2007.

\bibitem{APRU12}
John Augustine, Gopal Pandurangan, Peter Robinson, and Eli Upfal.
\newblock Towards robust and efficient computation in dynamic peer-to-peer
  networks.
\newblock In {\em ACM-SIAM}, SODA 2012, pages 551--569. SIAM, 2012.

\bibitem{AKL08}
C.~Avin, M.~Kouck{\'y}, and Z.~Lotker.
\newblock How to explore a fast-changing world (cover time of a simple random
  walk on evolving graphs).
\newblock In {\em Proc. of 35th Coll. on Automata, Languages and Programming
  (ICALP)}, pages 121--132, 2008.

\bibitem{awerbuch+bbs:route}
B.~Awerbuch, P.~Berenbrink, A.~Brinkmann, and C.~Scheideler.
\newblock Simple routing strategies for adversarial systems.
\newblock In {\em IEEE FOCS}, pages 158--167, 2001.

\bibitem{awerbuch+l:flow}
B.~Awerbuch and F.~T. Leighton.
\newblock Improved approximation algorithms for the multi-commodity flow
  problem and local competitive routing in dynamic networks.
\newblock In {\em ACM STOC}, pages 487--496, May 1994.

\bibitem{awerbuch+bs:anycast}
Baruch Awerbuch, André Brinkmann, and Christian Scheideler.
\newblock Anycasting in adversarial systems: Routing and admission control.
\newblock In {\em ICALP'03}, pages 1153--1168, 2003.

\bibitem{AS09}
Baruch Awerbuch and Christian Scheideler.
\newblock Towards a scalable and robust {DHT}.
\newblock {\em Theory of Computing Systems}, 45:234--260, 2009.

\bibitem{ozlap12}
Ozalp Babaoglu, Moreno Merzolla, and Michele Tamburini.
\newblock Design and implementation of a p2p cloud system.
\newblock {\em SAC}, March, 2012.

\bibitem{BBCES2006}
Amitabha Bagchi, Ankur Bhargava, Amitabh Chaudhary, David Eppstein, and
  Christian Scheideler.
\newblock The effect of faults on network expansion.
\newblock {\em Theory Comput. Syst.}, 39(6):903--928, 2006.

\bibitem{brahms}
Edward Bortnikov, Maxim Gurevich, Idit Keidar, Gabriel Kliot, and Alexander
  Shraer.
\newblock Brahms: Byzantine resilient random membership sampling.
\newblock {\em Computer Networks}, 53, March, 2009.

\bibitem{Canny02}
John~F. Canny.
\newblock Collaborative filtering with privacy.
\newblock In {\em IEEE Symposium on Security and Privacy}, pages 45--57, 2002.

\bibitem{santoro}
Arnaud Casteigts, Paola Flocchini, Walter Quattrociocchi, and Nicola Santoro.
\newblock Time-varying graphs and dynamic networks.
\newblock {\em CoRR}, abs/1012.0009, 2010.
\newblock Short version in ADHOC-NOW 2011.

\bibitem{dynamic-survey}
Arnaud Casteigts, Paola Flocchini, Walter Quattrociocchi, and Nicola Santoro.
\newblock Time-varying graphs and dynamic networks.
\newblock {\em CoRR}, abs/1012.0009, 2010.

\bibitem{malugo10}
Yu-Wei Chan, Tsung-Hsuan Ho, Po-Chi Shih, and Yeh-Ching Chung.
\newblock Malugo: A peer-to-peer storage system.
\newblock {\em Int. J. ad hoc and ubiquitous computing}, 5(4), 2010.

\bibitem{Cloudmark}
Website~of Cloudmark~Inc.
\newblock http://cloudmark.com/.

\bibitem{crashplan}
Website~of Crashplan~Inc.
\newblock http://www.crashplan.com/.

\bibitem{DBGWK06}
Souptik Datta, Kanishka Bhaduri, Chris Giannella, Ran Wolff, and Hillol
  Kargupta.
\newblock Distributed data mining in peer-to-peer networks.
\newblock {\em IEEE Internet Computing}, 10(4):18--26, 2006.

\bibitem{past01}
P.~Druschel and A.~Rowstron.
\newblock Past: A large-scale, persistent peer-to-peer storage utility.
\newblock In {\em HotOS VIII}, pages 75--80, 2001.

\bibitem{dr01}
P.~Druschel and A.~Rowstron.
\newblock Storage management and caching in past, a large-scale, persistent
  peer-to-peer storage utility.
\newblock {\em In Proc. of ACM SOSP}, 2001.

\bibitem{DPPU88}
Cynthia Dwork, David Peleg, Nicholas Pippenger, and Eli Upfal.
\newblock Fault tolerance in networks of bounded degree.
\newblock {\em SIAM J. Comput.}, 17(5):975--988, 1988.

\bibitem{FPJKA07}
Jarret Falkner, Michael Piatek, John~P. John, Arvind Krishnamurthy, and
  Thomas~E. Anderson.
\newblock Profiling a million user dht.
\newblock In {\em Internet Measurement Comference}, pages 129--134, 2007.

\bibitem{FS02}
Amos Fiat and Jared Saia.
\newblock Censorship resistant peer-to-peer content addressable networks.
\newblock In {\em SODA}, pages 94--103, 2002.

\bibitem{GKLL09}
Roxana Geambasu, Tadayoshi Kohno, Amit~A. Levy, and Henry~M. Levy.
\newblock Vanish: Increasing data privacy with self-destructing data.
\newblock In {\em USENIX Security Symposium}, pages 299--316, 2009.

\bibitem{gms05}
Christos Gkantsidis, Milena Mihail, and Amin Saberi.
\newblock Hybrid search schemes for unstructured peer-to-peer networks.
\newblock In {\em IEEE INFOCOM}, 2005.

\bibitem{SGG02}
P.~Krishna Gummadi, Stefan Saroiu, and Steven~D. Gribble.
\newblock A measurement study of napster and gnutella as examples of
  peer-to-peer file sharing systems.
\newblock {\em Computer Communication Review}, 32(1):82, 2002.

\bibitem{hasan05}
Ragib Hasan, Zahid Anwar, William Yurcik, Larry Brumbaugh, and Roy Campbell.
\newblock A survey of peer-to-peer storage techniques for distributed file
  systems.
\newblock In {\em Proc of ITCC}, pages 205--213. IEEE Computer Society, 2005.

\bibitem{HK03}
Kirsten Hildrum and John Kubiatowicz.
\newblock Asymptotically efficient approaches to fault-tolerance in
  peer-to-peer networks.
\newblock In {\em DISC}, volume 2848 of {\em Lecture Notes in Computer
  Science}, pages 321--336. Springer, 2003.

\bibitem{JP12}
T.~Jacobs and G.~Pandurangan.
\newblock Stochastic analysis of a churn-tolerant structured peer-to-peer
  scheme.
\newblock {\em Peer-to-Peer Networking and Applications}, 2012.

\bibitem{KKKSS10}
Bruce~M. Kapron, David Kempe, Valerie King, Jared Saia, and Vishal Sanwalani.
\newblock Fast asynchronous byzantine agreement and leader election with full
  information.
\newblock {\em ACM Transactions on Algorithms}, 6(4), 2010.

\bibitem{Koorde}
M.~Kashoek and D.~Karger.
\newblock Koorde: A simple degree optimal distributed hash table.
\newblock {\em In IPTPS}, 2003.

\bibitem{KSSV06}
Valerie King, Jared Saia, Vishal Sanwalani, and Erik Vee.
\newblock Towards secure and scalable computation in peer-to-peer networks.
\newblock In {\em FOCS}, pages 87--98, 2006.

\bibitem{kuhn-survey}
F.~Kuhn and R.~Oshman.
\newblock Dynamic networks: Models and algorithms.
\newblock {\em SIGACT News}, 42(1):82--96, 2011.

\bibitem{kuhn+lo:dynamic}
Fabian Kuhn, Nancy Lynch, and Rotem Oshman.
\newblock Distributed computation in dynamic networks.
\newblock In {\em ACM STOC}, pages 513--522, 2010.

\bibitem{KSW10}
Fabian Kuhn, Stefan Schmid, and Roger Wattenhofer.
\newblock Towards worst-case churn resistant peer-to-peer systems.
\newblock {\em Distributed Computing}, 22(4):249--267, 2010.

\bibitem{LS03}
C.~Law and K.-Y. Siu.
\newblock Distributed construction of random expander networks.
\newblock In {\em INFOCOM 2003. Twenty-Second Annual Joint Conference of the
  IEEE Computer and Communications. IEEE Societies}, volume~3, pages 2133 --
  2143 vol.3, march-3 april 2003.

\bibitem{LCP+04}
Eng~Keong Lua, Jon Crowcroft, Marcelo Pias, Ravi Sharma, and Steven Lim.
\newblock A survey and comparison of peer-to-peer overlay network schemes.
\newblock {\em IEEE communications survey and tutorial}, 2004.

\bibitem{shenker}
Q.~Lv, P.~Cao, E.~Cohen, K.~Li, and S.~Shenker.
\newblock Search and replication in unstructured peer-to-peer networks.
\newblock {\em In Proceedings of the 16th international conference on
  Supercomputing}, pages 84--95, 2002.

\bibitem{MS05}
David~J. Malan and Michael~D. Smith.
\newblock Host-based detection of worms through peer-to-peer cooperation.
\newblock In Vijay Atluri and Angelos~D. Keromytis, editors, {\em WORM}, pages
  72--80. ACM Press, 2005.

\bibitem{upfal}
M.~Mitzenmacher and E.~Upfal.
\newblock {\em Probability and Computing: Randomized Algorithms and
  Probabilistic Analysis}.
\newblock Cambridge University Press, 2004.

\bibitem{aravind}
Ruggero Morselli, Bobby Bhattacharjee, Michael~A. Marsh, and Aravind
  Srinivasan.
\newblock Efficient lookup on unstructured topologies.
\newblock {\em IEEE Journal on selected areas in communications}, 15(1),
  January 2007.

\bibitem{dynamic-quorum03}
M.~Naor and U.~Wieder.
\newblock Scalable and dynamic quorum systems.
\newblock In {\em PODC}, 2003.

\bibitem{NU-quorum05}
M.~Naor and U.~Wieder.
\newblock The dynamic and-or quorum system.
\newblock In {\em DISC}, 2005.

\bibitem{NW03}
Moni Naor and Udi Wieder.
\newblock A simple fault tolerant distributed hash table.
\newblock In {\em IPTPS}, pages 88--97, 2003.

\bibitem{PRU01}
Gopal Pandurangan, Prabhakar Raghavan, and Eli Upfal.
\newblock Building low-diameter {P2P} networks.
\newblock In {\em FOCS}, pages 492--499, 2001.

\bibitem{p2p-storage-survey}
Arjan Peddemors.
\newblock Cloud storage and peer-to-peer storage - end-user considerations and
  product overview.
\newblock {\em http://www.novay.nl/okb/publications/152}, 2010.

\bibitem{r89}
M.~Rabin.
\newblock Efficient dispersal of information for security, load balancing, and
  fault tolerance.
\newblock {\em Journal of the ACM}, 36(2):335--348, 1989.

\bibitem{RF+01}
S.~Ratnasamy, P.~Francis, M.~Handley, R.~Karp, and S.~Shenker.
\newblock A scalable content addressable network.
\newblock {\em In the Proceedings of ACM SIGCOMM}, 2001.

\bibitem{Pastry}
A.~Rowstron and P.~Druschel.
\newblock Pastry: Scalable, decentralized object location, and routing for
  large-scale peer-to-peer systems.
\newblock In {\em Proc. of the IFIP/ACM Intenrational Conference on Distributed
  Systems Platforms}, pages 329--350, 2001.

\bibitem{SFG+02}
J.~Saia, A.~Fiat, S.~Gribble, A.~Karlin, and S.~Saroiu.
\newblock Dynamically fault-tolerant content addressable networks.
\newblock {\em In the Proceedings of the 1st International Workshop on
  Peer-to-Peer Systems}, March 2002.

\bibitem{SMP12}
Atish~Das Sarma, Anisur~Rahaman Molla, and Gopal Pandurangan.
\newblock {Fast Distributed Computation in Dynamic Networks via Random Walks}.
\newblock {\em http://arxiv.org/abs/1205.5525}, May 2012.

\bibitem{Scheideler05}
Christian Scheideler.
\newblock How to spread adversarial nodes?: rotate!
\newblock In {\em STOC}, pages 704--713, 2005.

\bibitem{SS09}
Christian Scheideler and Stefan Schmid.
\newblock A distributed and oblivious heap.
\newblock In {\em Automata, Languages and Programming}, volume 5556 of {\em
  Lecture Notes in Computer Science}, pages 571--582. Springer Berlin /
  Heidelberg, 2009.

\bibitem{SW02}
Subhabrata Sen and Jia Wang.
\newblock Analyzing peer-to-peer traffic across large networks.
\newblock In {\em Proceedings of the 2nd ACM SIGCOMM Workshop on Internet
  measurment}, IMW '02, pages 137--150, New York, NY, USA, 2002. ACM.

\bibitem{Shaked2007}
Moshe Shaked and J.~George Shanthikumar.
\newblock {\em {Stochastic Orders}}.
\newblock Springer, 2007.

\bibitem{SM+01}
I.~Stoica, R.~Morris, D.~Karger, F.~Kaashoek, and H.~Balakrishnan.
\newblock Chord: A scalable peer-to-peer lookup service for internet
  applications.
\newblock {\em In the Proceedings of the 2001 ACM SIGCOMM Conference}, pages
  149--160, 2001.

\bibitem{SR06}
Daniel Stutzbach and Reza Rejaie.
\newblock Understanding churn in peer-to-peer networks.
\newblock In {\em Proceedings of the 6th ACM SIGCOMM conference on Internet
  measurement}, IMC '06, pages 189--202, New York, NY, USA, 2006. ACM.

\bibitem{symform}
Website~of Symform:.
\newblock http://www.symform.com/.

\bibitem{Upfal94}
Eli Upfal.
\newblock Tolerating a linear number of faults in networks of bounded degree.
\newblock {\em Inf. Comput.}, 115(2):312--320, 1994.

\bibitem{VAS04}
Vasileios Vlachos, Stephanos Androutsellis-Theotokis, and Diomidis Spinellis.
\newblock Security applications of peer-to-peer networks.
\newblock {\em Comput. Netw.}, 45:195--205, June 2004.

\bibitem{ZKJ01}
B.~Zhao, J.~Kubiatowicz, and A.~Joseph.
\newblock Tapestry: An infrastructure for fault-tolerant wide-area location and
  routing.
\newblock {\em Technical Report UCB/CSD-01-1141, UC Berkeley}, April, 2001.

\bibitem{zs+06}
Ming Zhong and Kai Sheng.
\newblock Popularity biased random walks for peer to peer search under the
  squareroot principle.
\newblock In {\em IPTPS}, 2006.

\end{thebibliography}

\end{document}